  \theoremstyle{plain}
  \newtheorem{prop*}{\protect\propositionname}
  \providecommand{\propositionname}{Proposition}
\newcommand{\avg}[1]{\left\langle #1\right\rangle}
\newcommand{\T}[1]{\textrm{#1}}
\newcommand{\jj}{\mathbbm{j}}
\newcommand{\red}[1]{{\color{black} #1}} 
\begin{document}

\title{Patterns of patterns of synchronization: Noise induced \red{attractor} switching in rings of coupled nonlinear oscillators}

\author{Jeffrey Emenheiser}
\affiliation{Complexity Sciences Center, University of California, Davis, CA USA 95616}
\affiliation{Department of Physics, University of California, Davis, CA USA 95616}

\author{Airlie Chapman}
\affiliation{William E. Boeing Department of Aeronautics \& Astronautics, University of Washington, Seattle, WA USA 98195}

\author{M{\'a}rton P{\'o}sfai}
\affiliation{Complexity Sciences Center, University of California, Davis, CA USA 95616}
\affiliation{Department of Computer Science, University of California, Davis, CA USA 95616}

\author{\\James P. Crutchfield}
\affiliation{Complexity Sciences Center, University of California, Davis, CA USA 95616}
\affiliation{Department of Physics, University of California, Davis, CA USA 95616}
\affiliation{Santa Fe Institute, Santa Fe, New Mexico, USA 87501} 

\author{Mehran Mesbahi}
\affiliation{William E. Boeing Department of Aeronautics \& Astronautics, University of Washington, Seattle, WA USA 98195}

\author{Raissa M. D'Souza}
\affiliation{Complexity Sciences Center, University of California, Davis, CA USA 95616}
\affiliation{Department of Computer Science, University of California, Davis, CA USA 95616}
\affiliation{Santa Fe Institute, Santa Fe, New Mexico, USA 87501} 
\affiliation{Department of Mechanical and Aerospace Engineering, University of California, Davis, CA USA 95616}

\date{\today}

\begin{abstract}
\red{Following the long-lived qualitative-dynamics tradition of explaining behavior in complex systems via the architecture of their attractors and basins, we investigate the patterns of switching between qualitatively distinct trajectories in a network of synchronized oscillators. Our system, consisting of nonlinear amplitude-phase oscillators arranged in a ring topology with reactive nearest neighbor coupling, is simple and connects directly to experimental realizations.} We seek to understand how the multiple stable synchronized states connect to each other in state space by applying Gaussian white noise to each of the oscillators' phases.
To do this, we first identify a set of locally stable limit cycles at any given coupling strength. For each of these attracting states, we analyze the effect of weak noise via the covariance matrix of deviations around those attractors.  We then explore the noise-induced attractor switching behavior via numerical investigations. For a ring of three oscillators we find that an attractor-switching event is always accompanied by the crossing of two adjacent oscillators' phases.  For larger numbers of oscillators we find that the distribution of times required to stochastically leave a given state falls off exponentially, and we build an attractor switching network out of the destination states as a coarse-grained description of the high-dimensional attractor-basin architecture.
\end{abstract}

\keywords{Multistability, Synchronization, Attractor Switching Networks}

\maketitle

\begin{quotation}
Proper functioning of large-scale complex systems, from metabolism to global economics, relies on the coordination of interdependent systems. Such coordination -- the emergence of synchronization in coupled systems -- is itself an important and widely studied collective behavior. However, predicting system behavior and controlling it to maintain function or mitigate failure present severe challenges to contemporary science. Prediction and control depend most directly on knowing the architecture of the stable and unstable behaviors of such high-dimensional dynamical systems. To make progress, here we explore limit-cycle attractors arising when \red{ring} networks of nonlinear oscillators synchronize, demonstrating how synchronization emerges and stabilizes and laying out the combinatorial diversity of possible synchronized states. We capture the global attractor-basin architecture of how the distinct synchronized states can be reached from each other via attractor switching networks.
\end{quotation}

\section{Introduction}\label{sec:introduction}

From the gene regulatory networks that control organismal development~\cite{davidson2005gene} and the coherent oscillations between brain regions responsible for cognition~\cite{gray1992synchronization} to the connected technologies that support critical infrastructure~\cite{rinaldi2001identifying,rosato2008modelling,gonzalez2015interdependent}, systems at many scales of modern society rely on the coordination of the dynamics of interdependent systems. Analyzing the mechanisms driving such complex networks presents serious challenges to dynamical systems, statistical mechanics, and control theory, including but not limited to the overtly high dimension of their state spaces. This precludes directly identifying and visualizing their attractors and attractor-basin organization. Moreover, without knowledge of the latter large-scale architecture, predicting network behavior, let alone developing control strategies to maintain function or mitigate failure, is impossible.

To shed light on these challenges, we explore limit-cycle attractors arising when \red{rings of coupled nonlinear oscillators} synchronize. We demonstrate how synchronization emerges and stabilizes and identify the diversity of synchronized states. We probe the global attractor-basin architecture by driving the networks with noise, capturing how the distinct synchronized states can be reached from each other via what we call attractor switching networks. The analysis relies on the use of limit-cycle attractors to define coarse-grained units of system state space.

\red{ In this way, our study of attractor-basin architecture falls in line with the methods of qualitative dynamics introduced by Poincare~\cite{jpc8}. Confronted with unsolvable nonlinear dynamics in the three-body problem, Poincare showed that system behaviors are guided and constrained by invariant state space structures -- fixed point, limit cycle, and chaotic attractors -- and their arrangement in state space -- basins of attraction and their separatrices. The power of his qualitative approach came in providing a concise description of all possible behaviors of a system, without requiring detailed system solutions. His architectural approach is more recently expressed in terms of Smale basic sets~\cite{jpc6,jpc7} and Conley's index theory~\cite{jpc4,jpc5}. These show that any system decomposes into recurrent and wandering subspaces in which the behavior is a gradient flow. In short, there is a kind of Lyapunov function over the entire state space, underlying the architectural view of attractors and their basins. This view is so basic to our modern understanding of nonlinear complex systems that it has been rechristened as the ``Fundamental Theorem of Dynamical Systems''~\cite{jpc3}. As we will see, our analytical study of oscillator arrays appeals to Lyapunov functions to locally analyze limit cycle stability and noise robustness, while our numerical explorations allow us to knit together the stable attractors into a network of stable oscillations, connected by particular pathways that facilitate switching between them.

Practically, complicated attractor-connectivity architectures can be probed via external controls or added noise. We focus on the latter here, following recent successful explorations of noise-driven large-scale systems. For example, the analysis of bistable gene transcription networks showed that attractor switches can be induced by periodic pulses of noise~\cite{jpc1}. Another recent study of networks of pulse-coupled oscillators showed that unstable attractors become prevalent with increasing network size and the attractors are closely approached by basin tendrils of other attractors. Thus, arbitrarily small noise can lead to switching between attractors~\cite{jpc2}. Our explorations illustrate the theoretical foundations and complements the newer works by focusing on the dynamics of synchronization.}

Synchronization between oscillators is itself an important and widely studied collective behavior of coupled systems~\cite{PIK03}, with examples ranging from neural networks~\cite{HOP97} to power grids~\cite{MOT13}, clapping audiences~\cite{NED00}, and fireflies flashing in unison~\cite{MIR90}. Although different in scope and nature, all of these examples can be modeled as \red{coupled oscillators}. Decades of research has revealed that a system of coupled oscillators may produce a rich variety of behaviors; in addition to full synchronization, more complex patterns may emerge, including chaos~\cite{MAT91}, chimera states~\cite{ABR04,HAG12}, and cluster synchronization~\cite{ARE06,WAN09}. Here, we study rings of oscillators -- a system that exhibits multiple stable synchronized patterns called rotating waves~\cite{ERM85}. Rings of oscillators have been extensively studied~\cite{BRE97,ROG04,WIL06,CHO10,HA12,SHA15}; our contribution in this respect focuses on \emph{reactively} coupled amplitude-phase oscillators and the organization of their attractors, basins, and noise-driven basin transitions.

\red{Reactive coupling, in the context of electromechanical oscillators, is that which does not dissipate energy, such as ideal elastic and electrostatic interactions between devices\cite{LIF08}. A primary motivation of this work is to connect with experiment, using reactive coupling to characterize systems of nearest-neighbor coupled rings of nanoscale piezoelectric oscillators~\cite{CRO04}. Recent experiments investigated synchronous behavior of two such nanoelectromechanical systems (NEMS)~\cite{MAT14}, and it is expected that in the near future larger rings and more complex arrangements will be realized.~\cite{fonUpcoming} In the context of the complex values used to model these oscillators, reactive coupling means that the coefficient of the Laplacian coupling terms is purely imaginary. This coupling is captured, between Landau-Stuart oscillators, as a special case of the complex Ginzburg-Landau equation, which describes a wide range of physical phenomena~\cite{kuramoto1984chemical, ARA02}.}

If no noise is present, the system settles at one of its stable steady states. Exactly which stable state depends on initial conditions. In the presence of noise, the long-term behavior of the system is no longer characterized by deterministic attractors. Depending on the level of noise three possible scenarios may emerge: (i)~if the noise is small, the system fluctuates around an attractor; (ii)~if the noise is strong, the system is randomly pushed around in the state space suppressing the deterministic dynamics; and (iii)~intermediate levels of noise cause the system to fluctuate around an attractor and occasionally jump to the basin of attraction of a different attractor. The latter scenario \red{suggests} a coarse-grained description of the system’s global organization: we specify the effective ``macrostate'' of the system by the attractor it fluctuates around, and we map out the likelihood of transitions to other attractors. These transitions form an attractor switching network (ASN) capturing the coarse-grained dynamics of the system. Noise and external perturbation induced jumps in the ASN have been suggested as a feasible strategy to control large-scale nonlinear systems~\cite{COR13,PIS14,WEL15}.

Our goal is to study the fluctuations of the system and attractor switching  in the presence of additive uncorrelated Gaussian noise in the phases of oscillators. Setting up the analysis, we introduce the system in Sec.~\ref{Sec:Deterministic}, finding the available patterns of synchronization in Sec.~\ref{Sec:Solutions} and their local stability in Sec.~\ref{Sec:Stability}. We introduce noisy dynamics in Sec.~\ref{Sec:Stochastic}. Based on the linearized dynamics we derive a closed-form expression that predicts the system’s response to small noise in Sec.~\ref{Sec:Covariance}. We demonstrate that the attractor switching occurs via a phase-crossing mechanism Sec.~\ref{Sec:Phase Crossing}. This motivates the coarse-graining of state space such that we can finally compile an ASN for a network of $N=11$ oscillators in Sec. \ref{Sec:The ASN}.

\section{Deterministic Dynamics}
\label{Sec:Deterministic}

We study rings of reactively coupled oscillators that are governed by
\red{
\begin{align}
\label{eq:envelope}
\frac{dA_i}{dt}= -\frac{1}{2}A_i &+ \jj\alpha\vert A_i\vert^2A_i + \frac{A_i}{2|A_i|}  + \frac{\jj\beta}{2}\left[A_{i+1}-2A_{i} + A_{i-1}\right],
\end{align}
}
where $A_i\in\mathbb C$ describes the amplitude and phase of the $i^\text{th}$ oscillator \((i = 1,2,...,N)\) \red{and \(\jj=\sqrt{-1}\)}. The first three terms describe the oscillators in isolation: the first is the linear restoring force, the second term is the first nonlinear correction known as the Duffing nonlinearity, and the third term is a saturated feedback that allows the system to sustain oscillatory motion. The fourth term expresses the inter-oscillator \red{feedback}: the oscillators are diffusively coupled to their nearest neighbors with purely imaginary coefficient. Equation~(\ref{eq:envelope}) was derived to describe the slow modulation of rapid oscillations of a system of NEMS -- sometimes referred to as an envelope or modulational equation~\cite{LIF08}.

Although Eq. (\ref{eq:envelope}) presents a compact representation of the dynamics, it is in this case more insightful, and useful, to isolate the amplitude and phase components of the representative complex state. We therefore separate the dynamics of the amplitudes \(a_i\) in vector \(\bm{a}\in\mathbb{R}^N\) and those of the phases \(\phi_i\) in vector \(\bm{\phi}\in\mathbb{R}^N\) according to \red{\(A_i = a_ie^{\jj\phi_i}\)}. The system then evolves according to
\begin{align}
\label{eq:amplitude}
\frac{da_i}{dt} &= -\frac{a_i-1}{2} - \frac{\beta}{2}\bigg[a_{i+1}\sin\left(\phi_{i+1}-\phi_i\right)  + a_{i-1}\sin\left(\phi_{i-1}-\phi_i\right)\bigg], \\
\label{eq:phase}
\frac{d\phi_i}{dt} &= \alpha a_i^2 + \frac{\beta}{2}\bigg[\frac{a_{i+1}}{a_i}\cos\left(\phi_{i+1}-\phi_i\right)  + \frac{a_{i-1}}{a_i}\cos\left(\phi_{i-1}-\phi_i\right)-2\bigg].
\end{align}

These equations make it clear that in the absence of coupling (\(\beta = 0\)), each amplitude \(a_i\) will settle to unity, and all phases oscillate with constant frequency \(\alpha\). This frequency, proportional to the square of the oscillator's amplitude, comes from the device's nonlinear restoring force -- the Duffing nonlinearity. This effect is accordingly referred to as nonlinear frequency pulling. We now proceed to find solutions of the dynamics with nonzero coupling.

\subsection{Analytic Solutions: Rotating Waves}
\label{Sec:Solutions}

\begin{figure*}[t]
	\centering
	\includegraphics[width=\textwidth]{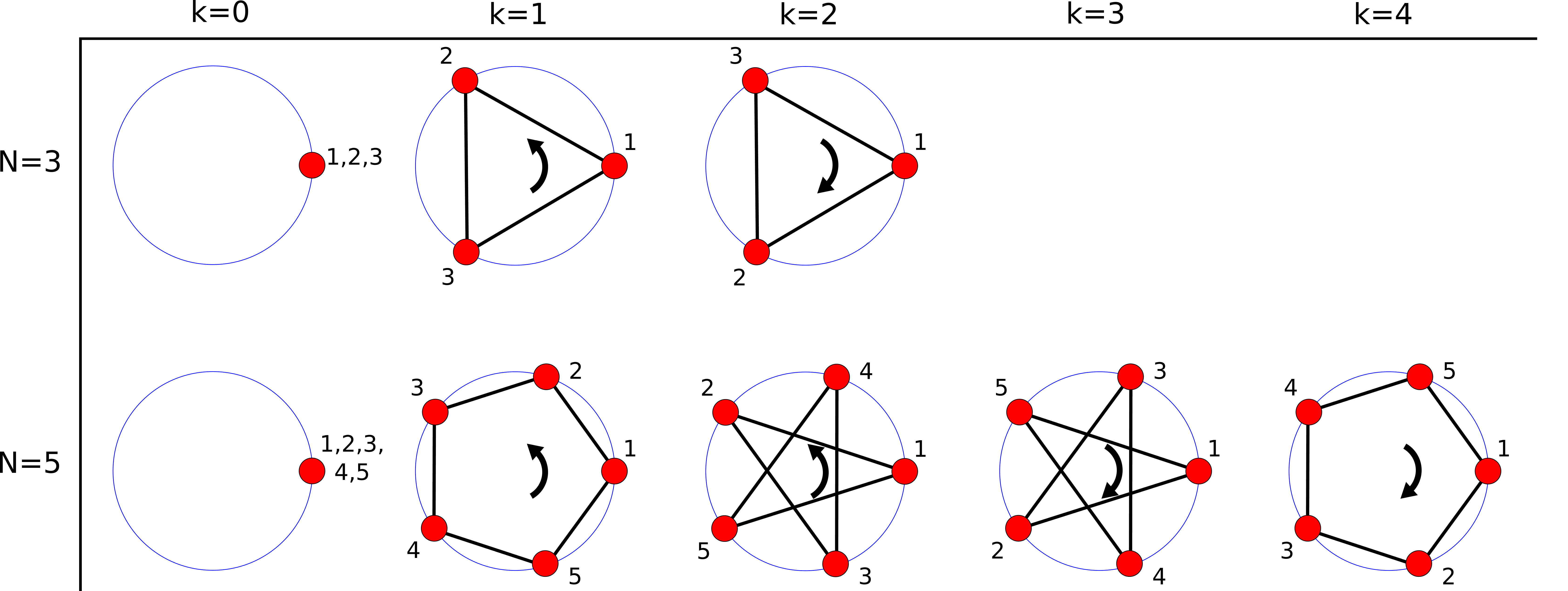}
	\caption{{\bf Rotating wave solutions.} Relative phase diagrams representing the N rotating wave solutions for systems of N = 3 and N = 5 oscillators. The blue circles represent the unit circle in the complex plane, and each red circle represents a value of an oscillator envelope \red{\(A_i = a_ie^{\jj\phi_i}\)} \red{labeled by its index}. The black lines indicate coupling between neighboring oscillators on the ring network. Since the oscillator positions are repeated, the black arrows indicate in which direction the oscillators are numbered. All rings are represented as if the first oscillator has zero phase, on the far right of the unit circle. The \(k=0\) diagrams have only one red circle and no black lines because all oscillators have the same phase and are thus all represented by the single red circle.}	
	\label{fig:wavediagrams-k}
\end{figure*}


To view self-organized patterns of synchronization of these nonlinear oscillators, we consider only the weak coupling regime, with positive nonlinear frequency pulling: $|\beta| \le \alpha \sim 1$. This selection is heavily motivated by upcoming experimental realizations of the system~\cite{fonUpcoming} and ensures that the internal nodal dynamics are not dominated by coupling terms. With zero coupling (\(\beta=0\)), each oscillator will follow its own limit cycle, and the composite attractor will have \(N\) dimensions -- one corresponding to the phase of each oscillator. For small but nonzero coupling (\(\beta\rightarrow0\)) we expect the leading order effect to be in the dynamics of phases. As these are limit cycles, displacements along phase are not restored except through the coupling edges. \red{Solving Eqs. (\ref{eq:amplitude}) and (\ref{eq:phase}) for sets of stationary phase differences with fixed unit amplitudes,}
\begin{align}
\frac{da_i}{dt}\bigg|_{\mathbf{a}=\mathbf{1}} = 0 &= -\frac{\beta}{2}\Big[\sin\Delta_{i} - \sin\Delta_{i-1}\Big],\\
\left(\frac{d\phi_{i+1}}{dt} - \frac{d\phi_i}{dt}\right)\Big|_{\mathbf{a}=\mathbf{1}} = 0 &= \frac{\beta}{2}\Big[\cos\Delta_{i+1} - \cos\Delta_{i-1}\Big],
\end{align}
\red{where \(\Delta_{i} \equiv \phi_{i+1}-\phi_i\) is the (signed) phase difference between adjacent oscillators \(i\) and \(i+1\).} These conditions are satisfied if and only if every other phase difference is equal to some \(\Delta\), where the other phase differences are either \(\pi-\Delta\) or also \(\Delta\). \red{Note that these conditions are independent of \(\beta\), so the solutions will be valid for all coupling strengths.} Because the ring is a periodic lattice and the sum of all \(N\) phase differences must be an integer multiple of \(2\pi\), limit cycles that satisfy the \(\pi-\Delta\) condition for alternating phase differences may exist if and only if the number of nodes is an integer multiple of four. To ease comparison of attractors in systems with various numbers of nodes, we limit our subsequent discussion to the solutions defined wholly by a single phase difference \(\Delta\) supported across all edges, implying that \(N\) may not be a multiple of four.

For limit cycles where all phase differences are identical, {\it i.e.,} \(\Delta_{i} = \Delta\text{ for all } i\), the periodic boundary condition requires \(\Delta\) to be an integer multiple of \(2\pi/N\), giving precisely \(N\) unique states of this sort. These states follow the trajectory
\begin{align}
\label{eq:ampsolution}
a_i(t) &= 1,\\
\label{eq:phasesolution}
\phi_i(t) &= \phi_i(0) + \left(\alpha +\beta\left(\cos\frac{2k\pi}{N}-1\right)\right)t,
\end{align}
specific to a particular wavenumber \(k\). These are the expected rotating wave solutions. Each rotating wave has a fixed phase configuration, with phase differences of \(2\pi k/N\), represented in Eq. (\ref{eq:phasesolution}) as initial phases \(\phi_i(0)\). The form of reactive coupling causes the frequency of oscillation also to be dependent upon the wavenumber. Noting that the phase difference \(\Delta\) is invariant under \(k\rightarrow N+k\), we choose to make the restriction \(0\le k <N\).

Relative phase diagrams representing the \(N\) unique configurations for \(N=3\) and \(N=5\) oscillator rings are shown in Fig. 1. In these, each oscillator is represented as a point on the unit circle in the complex plane, with edges connecting adjacent, coupled oscillators. Each edge connects oscillators with an arc length separation equal to the phase difference \(\Delta = 2\pi k/N\). We see that, for instance, \(N=5\) and \(k = 2\) or \(3\) results in next nearest neighbors being closer in phase than nearest neighbors. This is a general result; as \(k/N\rightarrow1/2\), neighboring oscillators will have a phase difference of \(\pi\) and next nearest oscillators have nearly equivalent phases. This is locally out-of-phase sychronization, in contrast to \(k=0\), which is completely in-phase synchronization, {\it i.e.,} zero phase difference between neighboring oscillators. We also see a symmetry in wave numbers \(k\) and \(N-k\). These waves travel in opposite directions around the ring; the phase configurations amount to a relabeling of oscillators, represented in the Fig. \ref{fig:wavediagrams-k} by arrows indicating the direction of labeling. Just as the wavenumber represents the number of wavelengths of the rotating wave along the length of the ring, it may be interpreted as the winding number of the ring about the origin when represented in the complex plane as in Fig. \ref{fig:wavediagrams-k}.

We have thus discovered \(N\) synchronized states that are possible nodes of the global attractor switching network and which the system might visit once noise is included in the dynamics. Although motivated by the weak coupling limit, these rotating waves are valid solutions at all coupling magnitudes. Note that there can be solutions that do not converge to the unit amplitude states enumerated here. With sufficiently weak coupling, however, oscillator amplitudes in attractors are in fact confined to stay within a distance of order \(\beta\) from unity. This is shown in Appendix~\ref{App:AttractorAmplitudes} using a Lyapunov-like potential function. Having enumerated such candidate synchronized limit cycles, we need to determine their stability in order to identify those that we expect the noisy system to visit for extended times.

\subsection{Local Stability: Attracting Patterns}
\label{Sec:Stability}

\begin{figure}[t]
	\centering
	\includegraphics[width=0.7\columnwidth]{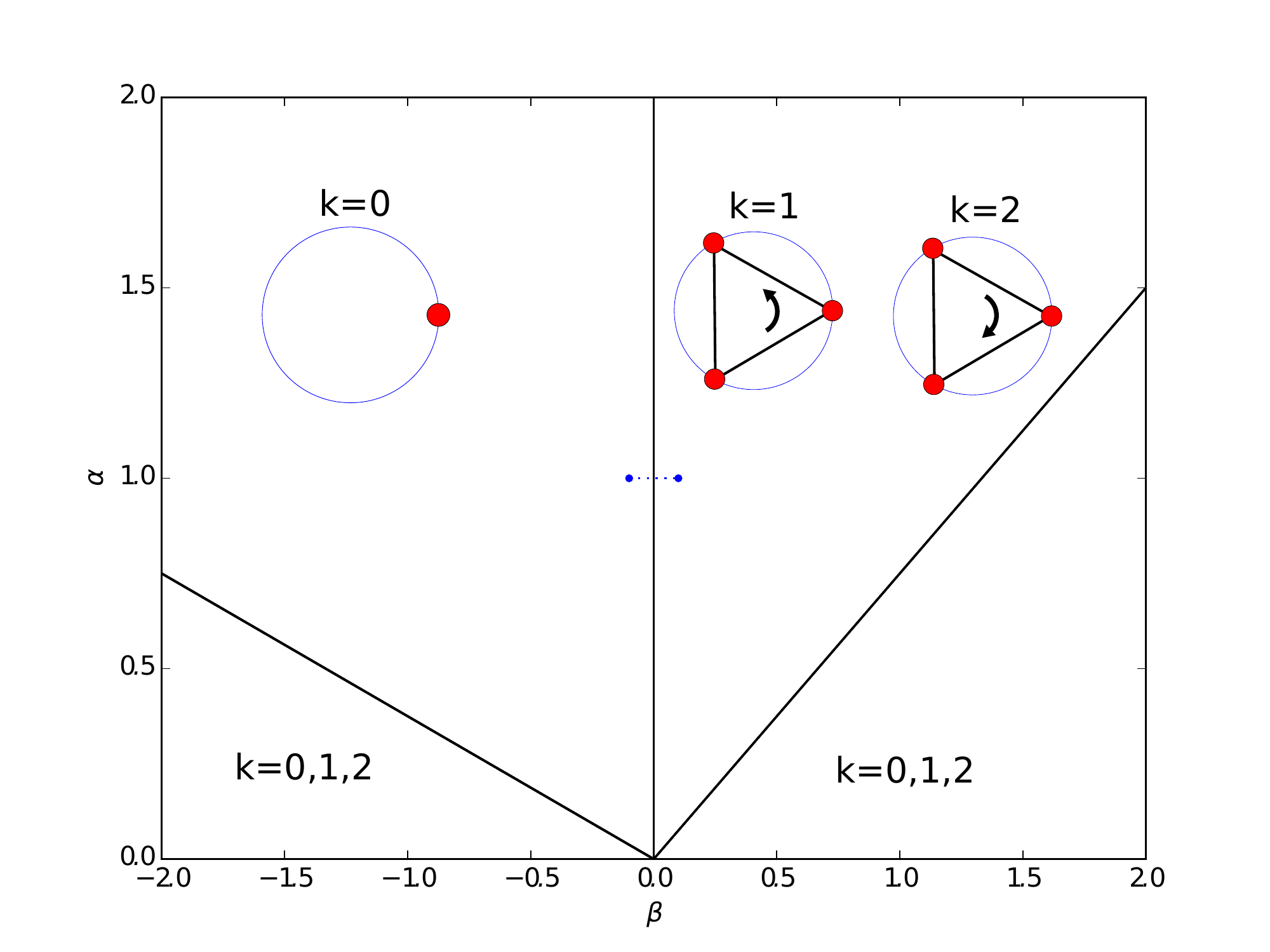}
	\caption{{\bf Regions of stability.} The stable rotating waves on the \(N=3\) ring in each of four regions of nonlinearity/coupling \((\alpha, \beta)\) parameter space, separated by solid black lines. The \(k=0\) wave is locally stable for all \(\beta<0\) and for \(\beta > \alpha \csc^2(\pi/3)\). The \(k=1,2\) waves are locally stable for all \(\beta>0\) and for \(\beta<-\alpha\csc^2(\pi/3)\sec(\pi/3)\). The blue dotted line indicates the parameters of likely experimental realizations. Our simulations were done at the endpoints \(\alpha = 1,\,\beta = \pm0.1\).}
	\label{fig:regions}
\end{figure}

Here we show that the stability of each rotating wave/pattern of synchronization to small perturbations is equivalent to finding the sign of \(\beta\cos(2\pi k/N)\). We then characterize the linear response of these waves to uncorrelated, white Gaussian noise on the oscillator phases and find that the \(k = 0\) and \(k\red{\approx} N/2\) waves amplify noise least in their respective stable regimes.

Linearizing \red{Eqs. (\ref{eq:amplitude}) and (\ref{eq:phase})} around any point on the limit cycle defined by wavenumber \(k\), we find the \(2N\times2N\) matrix \(F\) that governs the linear evolution of small deviations from that limit cycle. We write this matrix in block form, such that \(F_{ij}\) is the \(2\times 2\) matrix corresponding to the dependence of deviations in oscillator \(i\) on deviations in oscillator \(j\).
\begin{multline}
\label{eq:Linearized Dynamics without Noise}
\frac{d}{dt}\left[\begin{array}{c} \delta a_i\\\delta\phi_i\end{array}\right] = \sum_jF_{ij}\left[\begin{array}{c} \delta a_j\\\delta\phi_j\end{array}\right]
\\
=  \frac{1}{2}\sum_{j=0}^N\left[\begin{array}{cc} -I_{ij} - M_{ij}\beta\sin(2\pi k/N) & L_{ij}\beta\cos(2\pi k/N) \\ 4\alpha I_{ij} - L_{ij}\beta\cos(2\pi k/N) &-M_{ij}\beta\sin(2\pi k/N)\end{array}\right]\left[\begin{array}{c} \delta a_j\\\delta\phi_j\end{array}\right],
\end{multline}
where \(I\) is the \(N\times N\) identity matrix, \(L\) is the \(N\times N\) unweighted ring Laplacian matrix, and \(M\) is an \(N\times N\) next-nearest-neighbor oriented incidence matrix of the ring,
\begin{equation}
L_{ij} = \begin{cases} 2 & i=j \\ -1 & i=j\pm 1\\0&\text{otherwise}\end{cases}\qquad\,M_{ij} = \begin{cases} 1 & i=j+1\\-1 & i=j-1 \\ 0 & \text{otherwise .}\end{cases}
\end{equation}

 The local stability of each rotating wave is then determined by the signs of the eigenvalues of \(F\). While this is straightforward to do numerically, we find that exclusion of terms varying with \(M\) will not effect any changes of sign, as detailed in Appendix \ref{App:Linearization}. We denote this simplified matrix \(\tilde{F}\) and transform \(\tilde{F}\) by a matrix \(U\) to diagonalize the Laplacian \(L\), leaving a \(2\times2\) linear dynamics for each Laplacian mode.  The matrices \(L\) and \(M\) are not mutually diagonalizable, so this cannot be done with the full linearization \(F\). Deviations in these Laplacian modes are then governed by
\begin{equation}
\left(U\tilde{F}U^{-1}\right)_{ii} = \frac{1}{2}\left[\begin{array}{cc} -1 & \rho_i\beta\cos(2\pi k/N)\\ 4\alpha - \rho_i\beta\cos(2\pi k/N) &0\end{array}\right].
\end{equation}
where \(\rho_i = 4\sin^2\left(\frac{\lfloor i/2\rfloor\pi}{N}\right)\) are the eigenvalues of \(L\) \red{for the ring coupling topology} (and \(\lfloor\cdot\rfloor\) is the floor operation).

Defining \(x_i = \rho_i\beta\cos(2\pi k/N)\), we see that \(\tilde{F}\) represents stable trajectories if and only if all its eigenvalues \(\mu_{\pm,i} = -\frac{1}{4}\left(1\pm\sqrt{16\alpha x_i-4x_i^2+1}\right)\) have negative real part. That is, the rotating wave is stable if and only if \(4\alpha x_i-x_i^2<0\) for all Laplacian modes. The mode associated with \(\rho_1=0\), giving \(\mu_{-,1} = 0\), may in fact be ignored. This zero eigenvalue corresponds to the freedom of deviations along the limit cycle and is explicitly removed in Appendix \ref{App:Linearization} by stabilizing this allowed nullspace of \(\tilde{F}\). Then, there are two regimes in which a mode of the modified dynamics is stable: \(x_i<0\) and \(x_i>4\alpha\).

Now, we see that all \(\rho_{i>1}\) are strictly positive and therefore all \(x_{i>1}\) will be of the same sign as \(\beta\cos(2\pi k/N)\). With a given sign of the coupling \(\beta\), all wavenumbers \(k\) satisfying \(\beta\cos(2\pi k/N)<0\) will correspond to stable rotating waves for all coupling magnitudes \(|\beta|\).

A wave solution is also stable if \(\beta\cos(2k\pi/N)\) is large enough such that the smallest nonzero Laplacian eigenvalue \(\rho_2\) corresponds to \(x_2>4\alpha\). This occurs when \(\beta\cos(2k\pi/N)>\alpha\csc^2(\pi/N)\) (and requires \(\beta\cos(2\pi k/N)>0\)). This scenario clearly corresponds to large coupling magnitudes, which we are not considering here.

Figure \ref{fig:regions} portrays the above stability conditions in \((\alpha,\beta)\) parameter space of the \(N=3\) oscillator ring. There are four distinct regions: large or small coupling-to-nonlinearity ratio, with positive or negative coupling. The more nearly in- (out-of-) phase adjacent node oscillations are stable with small, negative (positive) coupling and become stable with positive (negative) coupling at some critical coupling magnitude proportional to the nonlinear coefficient \(\alpha\). Each rotating wave has a critical ratio \(|\beta|/\alpha\), proportional to \(\sin^2(\pi/N)\), above which the wave is stable for either sign of \(\beta\). As such, the required coupling magnitudes for this regime increase with \(N\). These boundaries were found analytically as described above and corroborated by diagonalizing the original linearization \(F\) numerically, validating the process of studying the modified dynamics in
\(\tilde{F}\).

\section{Stochastic Dynamics}
\label{Sec:Stochastic}

We have so far found attractors of the deterministic dynamics of rings of reactively coupled nonlinear oscillators. This identifies \red{orbits that may have some global importance in the system}, but gives little indication of the higher-level state space architecture. We investigate this organization by applying noise to the oscillators’ phases, first weakly and then strongly enough to induce distinct jumps between attracting limit cycles.

Specifically, we focus on the analysis of one of the most
ubiquitous and well-modelled forms of disturbances, namely white Gaussian
noise. The injection point is assumed to be an additive time-varying signal on the phases. This generates a perturbed dynamics of the form

\begin{align}
\label{eq:amplitudeWithNoise}
\frac{da_i}{dt} &= -\frac{a_i-1}{2} - \frac{\beta}{2}\bigg[a_{i+1}\sin\left(\phi_{i+1}-\phi_i\right) + a_{i-1}\sin\left(\phi_{i-1}-\phi_i\right)\bigg],\\
\label{eq:phaseWithNoise}
\frac{d\phi_i}{dt} &= \alpha a_i^2 + \frac{\beta}{2}\bigg[\frac{a_{i+1}}{a_i}\cos\left(\phi_{i+1}-\phi_i\right) + \frac{a_{i-1}}{a_i}\cos\left(\phi_{i-1}-\phi_i\right)-2\bigg] + w_i.
\end{align}
where \(w_i(t)\) is an element of $\bm{w}(t)\in\mathbb{R}^{N}$: an uncorrelated zero mean i.i.d.
Gaussian random process with covariance matrix $\sigma^2 I_N$.

\subsection{Weak Noise Response}
\label{Sec:Covariance}

\begin{figure*}[t]
	\centering
	\includegraphics[width=\textwidth]{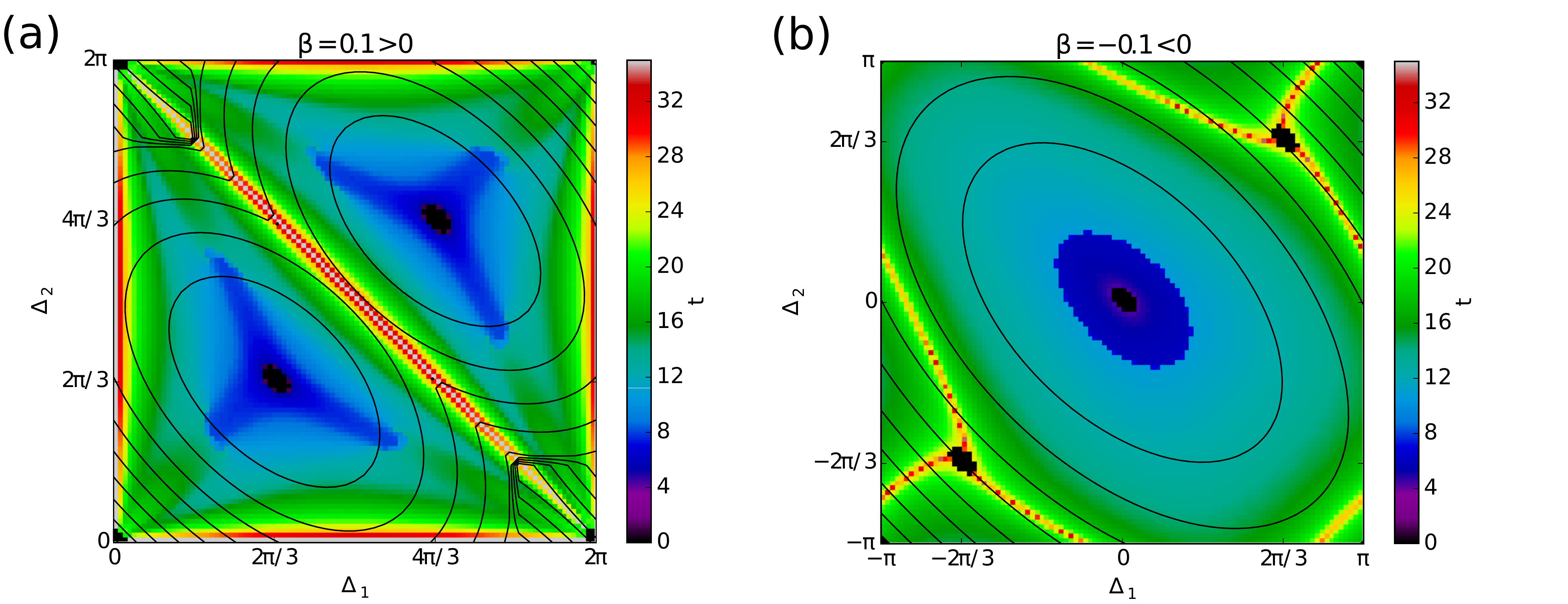}
	\caption{{\bf Potential function and times to convergence.} Level sets of the potential defined in Eq. (\ref{eq:Potential}) for \(N=3\) nodes, across the two free phase differences, \(\Delta_1, \Delta_2\), with unit oscillator amplitudes. Shown are results for \(\alpha=1\) with (a) \(\beta=0.1\), where there are two stable states, and (b) \(\beta=-0.1\), with one stable state. As the potential functions are not guaranteed to be good indicators of the convergence time, results from direct implementation of the deterministic dynamics are also shown. Here the colormap background indicates the time for the  deterministic system to converge to near a stable rotating wave.}
	\label{fig:TimeToConvergence}
\end{figure*}

Having identified attractors -- the stable rotating waves, we begin to study the basin architecture by characterizing the system's response to weak noise at each of the stable rotating waves, finding that \(k=0\) has the least amplification of noise when \(\beta<0\) and \(k\approx N/2\) has the least for \(\beta>0\).

Close to an attractor, the dynamics can be predominately described
by its linearization:
\begin{equation}
\frac{d}{dt}\left[\begin{array}{c}
\delta a\\
\delta\phi
\end{array}\right]=F\;\left[\begin{array}{c}
\delta a\\
\delta\phi
\end{array}\right]+\left[\begin{array}{c}
0\\
I
\end{array}\right]w,\label{eq:Linearized Dynamics with Noise}
\end{equation}
where $F$ is the linearized state matrix of Eq.~(\ref{eq:Linearized Dynamics without Noise}) associated with the wavenumber \(k\)
and weak coupling $\beta$, and the additive term describes injection
of noise into the phase dynamics. The local amplification of the noise can be described by the steady state covariance. Specifically, the expectation of the outer product of deviations from the attracting limit cycles is
\begin{equation}
P = \lim_{t\rightarrow\infty}\mathbb{E}\left\{
\left[\begin{array}{c}\delta a\\\delta\phi\end{array}\right]
\left[\begin{array}{c}\delta a\\\delta\phi\end{array}\right]^T\,
\right\}.
\end{equation}
Small entries in $P$ indicate a good robustness of the attractor
to noise as the steady state variances and cross-covariances of the
dynamics are small, representing small deviations around the equilibrium.

The eigenvalues of $P$ represent the axis lengths of the covariance ellipse. \red{Large eigenvalues are associated with directions of large noise amplification when compared to eigenvalues which are close to zero.}

For distinct pairs of rotating waves, \(k\) and \(N-k\), the covariance and associated eigenvalues are the same, exhibiting a common robustness to noise. This underlying symmetry indicates that equal time will be spent between attractors
$\Delta_{k}$ and $\Delta_{N-k}$ when driven by basin switching noise, to be discussed in the following sections.

As observed in the deterministic linearization, Eq. (\ref{eq:Linearized Dynamics without Noise}), stable limit cycles have one neutrally stable
mode that is undamped by the dynamics and appears, in the presence of noise, as a random walk along the limit cycle. The absence of a restoring force in this mode manifests
itself as an unbounded eigenvalue of the covariance matrix.

In addition to the infinite eigenvalue, there is a zero eigenvalue associated with the eigenvector
$\frac{1}{n}\left[\boldsymbol{1},0\right]^{T}$, which represents the
average amplitude of the dynamics. This indicates that near the attractor
the average amplitude is invariant to noise. This invariant feature \red{
is necessarily present wherever the dynamics are well approximated by an attractor's linear characterization}. Even in the
presence of attractor switching behavior, the average amplitude remains largely
unchanged.

The remaining \red{$2N-2$} eigenvalues and associated eigenmodes indicate the individual
character of the attractor basin each proportional to $\beta\cos(2\pi k/N)$. The average of these eigenvalues \red{$\bar{\eta}$} is given in closed form as
\red{
\begin{equation}
\label{eq:lambdabar}
\bar{\eta}=\sigma^{2}\left(1-\frac{\alpha\left(N+1\right)}{6\beta\cos\frac{2\pi k}{N}}\left[1+\Gamma(N,k,\alpha,\beta)\right]\right),
\end{equation}
where $\Gamma(N,k,\alpha,\beta)^{-1}\in16\alpha\left[\alpha,\alpha+\left|\beta\right|\right]$,} providing a metric of the attractors robustness (see Appendix \ref{App:Covariance} for details). Dependence on the wavenumber \(k\) comes in as the inverse of \(\beta\cos(2\pi k/N)\), indicating that as $\beta\cos(2\pi k/N)\rightarrow-\left|\beta\right|$,
the basins are more robust to noise. \red{Examining the metric as $N\rightarrow\infty$,
with the total input variance $\sigma_{T}^{2}=N\sigma^{2}$, wave
fraction $k_{f}=k/N$, and assuming large constant frequency $\alpha$, {\it i.e.,} $\alpha\gg1/4$
then
\[
\lim_{N\rightarrow\infty}\bar{\eta}\approx\frac{\alpha\sigma_{T}^{2}}{6\left(-\beta\cos(2\pi k_{f})\right)}.
\]
For large $N$, the attractor robustness scales linearly with oscillator
frequency and total input variance while inversely with the coupling
strength and cosine of the phase differences.}

The covariance matrix may be used to construct a quadratic quasi-potential for each rotating wave, which is guaranteed to be decreasing along the deterministic system trajectories for some finite neighborhood of the rotating wave and can be used to place lower bounds on the basin boundaries. We can build a global quasi-potential by piecewise stitching together the local potentials associated with each rotating wave, always selecting the one with the least value:

\begin{equation}
\label{eq:Potential}
V = \min_k\left(\left[\begin{array}{c}\delta a\\\delta \phi\end{array}\right]_k^T P_k \left[\begin{array}{c}\delta a\\\delta \phi\end{array}\right]_k\right).
\end{equation}

Slices of level sets of this potential for rings of \(N=3\) oscillators are plotted in Fig. \ref{fig:TimeToConvergence}. Negative coupling gives a single basin, but its locally motivated potential well is much larger than the two equal potential wells of positive coupling. To compare this to the full nonlinear system, we indicate the time-to-convergence in color, which cleanly shows the two basins of negative coupling, with the basin separatrix covering the set of points where one phase difference is zero.

\red{The covariance matrix $P_{e}$ associated with edge states $\delta e_{i}=\delta\phi_{i+1}-\delta\phi_{i}$
can be formed from the covariance matrix $P$. Due to the symmetry
in the dynamics, the diagonal elements of $P_{e}$ are in common and
correspond to the steady state variance $\bar{\sigma}^{2}$ of each edge
state with $\delta e_{i}\sim\mathcal{N}(0,\bar{\sigma}^{2})$. A probabilistic
feature that follows
is the steady state probability $p(\varepsilon_{l},\varepsilon_{u})$ that, for a single instance in time, all edge states
remain in the interval $\left[\varepsilon_{l},\varepsilon_{u}\right]$. Appendix~\ref{App:ProbabilityInterval} includes
an approximation of the probability of interval
containment using the error function $\mbox{erf}(\cdot)$, namely
\[
p(\varepsilon_{l},\varepsilon_{u})\approx\frac{1}{2^{N}}\left[\mbox{erf}\left(\frac{\varepsilon_{u}}{\bar{\sigma}\sqrt{2}}\right)-\mbox{erf}\left(\frac{\varepsilon_{l}}{\bar{\sigma}\sqrt{2}}\right)\right]^{N},
\]
where $\bar{\sigma}^{2}=\sigma^{2}\left(2-\left(4\beta N\cos(2\pi k/N)\right)^{-1}\sum_{i=1}^{N-1}\left(\alpha-\beta\cos(2\pi k/N)\sin^{2}(i\pi/N)\right)^{-1}\right)$.
Similar noise robustness characteristics can be observed over the
edge states as the full states $\delta a$ and $\delta\phi$ with
wavenumbers associated with $\beta\cos(2\pi k/N)$ close to $-\left|\beta\right|$
providing more noise robustness and so higher probabilities of maintaining
interval containment. Extending this concept into the time domain,
the probability $p_{\left[t_{1},t_{2}\right]}(\varepsilon_{l},\varepsilon_{u})$
of any edge state first exiting the interval $\left[\varepsilon_{l},\varepsilon_{u}\right]$
in time span $\left[t_{1},t_{2}\right]$ given a sampling interval
$\Delta t$ and the expected exit time $\mathbb{E}_{T}(\varepsilon_{l},\varepsilon_{u})$
of this interval are
\begin{equation}
\label{eq:ProbabilityInterval}
p_{\left[t_{1},t_{2}\right]}(\varepsilon_{l},\varepsilon_{u})=p(\varepsilon_{l},\varepsilon_{u})^{\left\lfloor t_{1}/\Delta t\right\rfloor }-p(\varepsilon_{l},\varepsilon_{u})^{\left\lfloor t_{2}/\Delta t\right\rfloor },\,\mbox{ and }\,\mathbb{E}_{T}(\varepsilon_{l},\varepsilon_{u})=-\frac{\Delta t}{\log(p(\varepsilon_{l},\varepsilon_{u}))}.
\end{equation}}

\subsection{Switching Dynamics: Phase Crossing}
\label{Sec:Phase Crossing}

\begin{figure}[t]
	\centering
	\includegraphics[width=0.8\columnwidth]{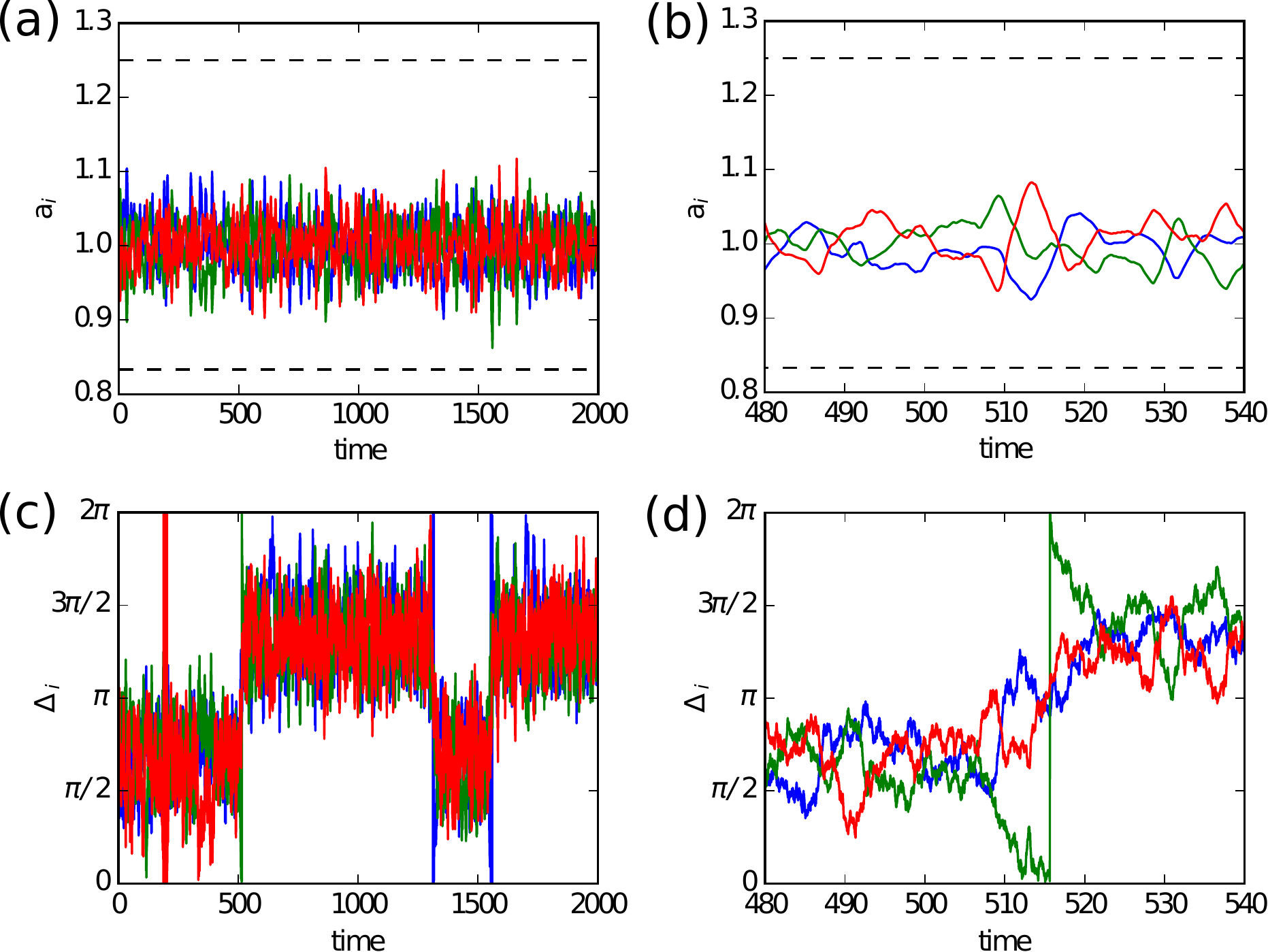}
	\caption{{\bf Attractor switching trajectories.} A representative stochastic trajectory of the \(N=3\) oscillator ring prepared with \(k=1\).
	Subfigures (a) and (b) show the three amplitudes at two different time scales. The dashed lines in these plots indicate the amplitude bounds established in Appendix \ref{App:AttractorAmplitudes}.
	Subfigure (c) shows the phase difference across each of the three edges, showing groupings at \(2\pi/3\) and \(4\pi/3\), with rapid switches between them.
	Subfigure (d) shows those same phase differences over the same time as subfigure (b), revealing that one phase difference passed through zero and rejoined the others at the other attracting state, indicating a switching event.
	This trajectory was generated with $\sigma^2=0.05$, $\beta=0.1$, and $\alpha=1$. }
	\label{fig:SwitchingTraj}
\end{figure}

So far we investigated the local properties of attractors and the response to small noise such that the system remains in the vicinity of stable rotating-wave attractors. In this section, we consider larger noise levels \red{in Eq. (\ref{eq:phaseWithNoise})} at which the system occasionally switches from the vicinity of one attractor to the vicinity of another.
Our goal is to provide a coarse-grained description of the global dynamics; \red{we wish to define an attractor switching network (ASN) in which each node represents the neighborhood of an attractor and the links connecting nodes represent the switches. To build an ASN, we must first be able to distinguish the vicinities of distinct attractors. It is computationally infeasible to capture the precise deterministic basins of attraction, so we investigate the characteristics of an attractor switch in a network of \(N=3\) oscillators to motivate a coarse-graining.}
Throughout, we employ numerical simulations using a fourth order Runge-Kutta algorithm with timestep \(t_\T{step} = 0.01\). At the end of each Runge-Kutta step, we add a zero-mean, normally distributed random number with variance \(\sigma^2 t_\T{step}\) to the phase of each oscillator \red{to capture the stochasticity of Eq.~(\ref{eq:phaseWithNoise})}.

Figure~\ref{fig:SwitchingTraj} plots a typical stochastic trajectory in a ring of $N=3$ oscillators with coupling \(\beta = 0.1\), nonlinearity \(\alpha = 1\), and noise level $\sigma^2=0.05$. As discussed in Sec.~\ref{Sec:Stability}, positive $\beta$ on the three-oscillator ring supports two stable attractors: rotating waves with wavenumbers $k = 1$ and $k=2$ (phase differences $\Delta = 2\pi/3$ and $\Delta = 4\pi/3$). Figures~\ref{fig:SwitchingTraj}a-b show the amplitudes of the three oscillators at different temporal resolutions; although noise is only directly added to the phases of the oscillators, it causes fluctuations in the amplitudes through the deterministic dynamics. However, as shown in Appendix~\ref{App:AttractorAmplitudes}, the amplitudes remain bounded within $\left[1/(1+2\left|\beta\right|),1/(1-2\left|\beta\right|)\right]$ (dashed lines in Figs.~\ref{fig:SwitchingTraj}a-b). Figure~\ref{fig:SwitchingTraj}c shows the phase differences $\Delta_i = \phi_{i}-\phi_{i-1}$ over time. The phase differences initially fluctuate around $\Delta = 2\pi/3$ and at the time of the first switch (\(\red{t}\sim 515\)) they rapidly reorganize around $\Delta = 4\pi/3$. Figure~\ref{fig:SwitchingTraj}d zooms in on that first switch, revealing that one of the phase differences passes through $0$. Indeed, such phase crossing necessarily happens if the system transitions from one rotating wave to another with a different wavenumber. Thus, the mechanism underlying the switching dynamics is associated with the phases of two neighboring oscillators crossing.

\subsection{Patterns of Patterns of Synchronization: the ASN}
\label{Sec:The ASN}

\begin{figure}[t]
	\centering
	\includegraphics[width=\columnwidth]{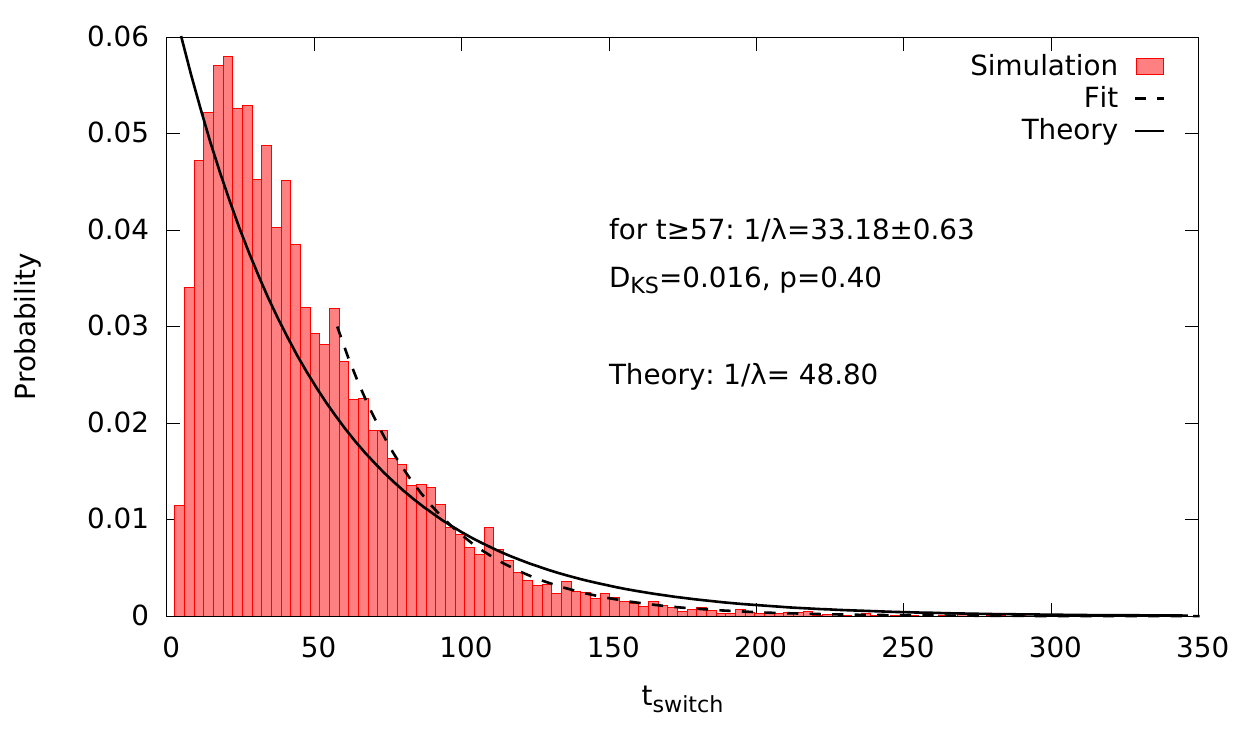}
	\caption{{\bf Switch time histogram.} Distribution of the time needed to leave state \red{$k=50$} for a ring of \red{$N=101$} oscillators based on \red{$10,000$} independent measurements with average \red{$\avg{t_\T{switch}}=47.93\pm 0.36$} where the error is the standard error of the mean. The distribution has an exponential tail:  According to the Kolmogorov-Smirnov test for \red{$t_\T{switch}\geq 57$} the distribution is consistent with an exponential distribution with \red{$1/\lambda=33.18\pm 0.63$ ($D_\T{KS}=0.016$, $p$-value $0.40$)}, where the $\lambda$ is the maximum likelihood fit of the rate parameter and the error \red{corresponds to the 95\% confidence interval}. The variance of the noise is \red{$\sigma^2=0.1$}; $\beta=0.1$ and  $\alpha=1$. \red{The theoretical curve is the probability of a zero crossing $p_{\left[t_{1},t_{2}\right]}(-2\pi k/N,2\pi-2\pi k/N)$ based on the linear analysis.}}
	\label{fig:switchtimehist}
\end{figure}

\begin{figure}[t]
	\centering
	\includegraphics[width=\columnwidth]{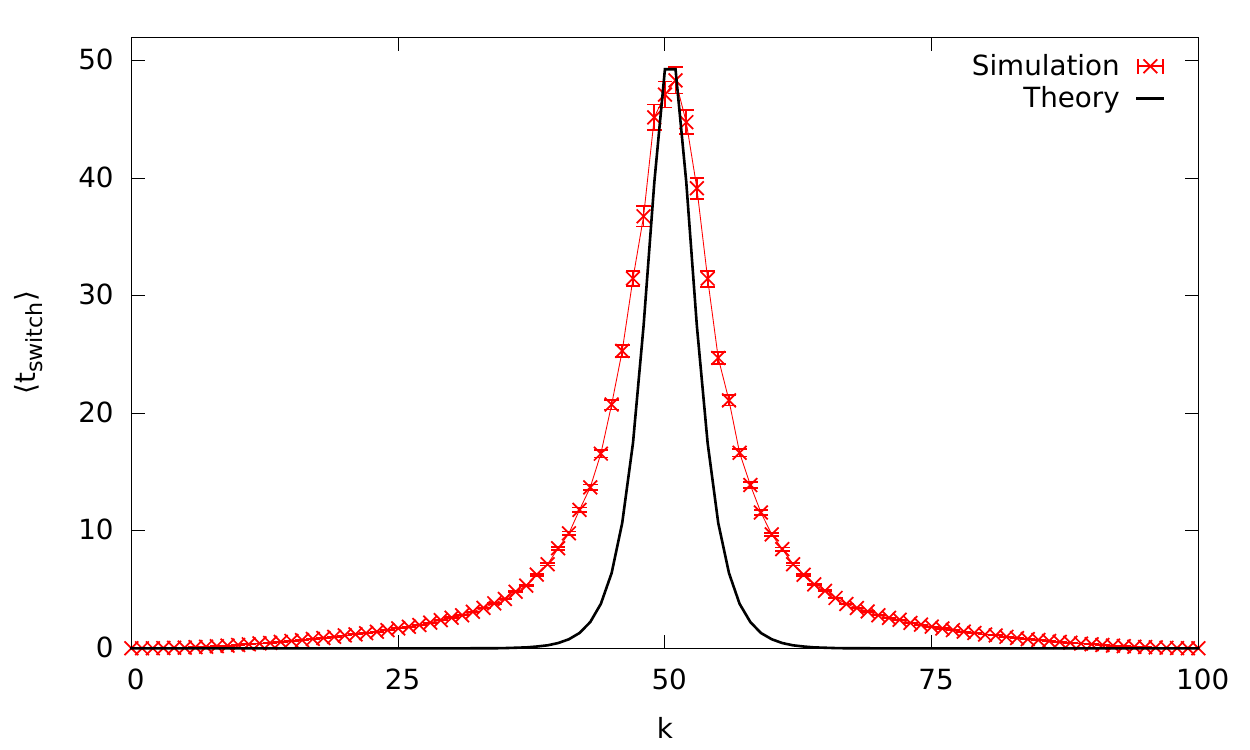}
	\caption{{\bf Average switch time.} Average switch time in function of the wave number $k$ indexing the limit cycles states of a ring of $N=101$ oscillators. Each point is an average over $1,000$ independent measurements, and the error bars represent the standard error of the mean. These simulations were run with $\sigma^2=0.1$, $\beta=0.1$, and $\alpha=1$. \red{The solid black curve shows the analytic prediction of the expected zero crossing time $\mathbb{E}_{T}(-2\pi k/N,2\pi-2\pi k/N)$.}}
	\label{fig:switchtime-k}
\end{figure}

\begin{figure*}[t]
	\centering
	\includegraphics[width=\textwidth]{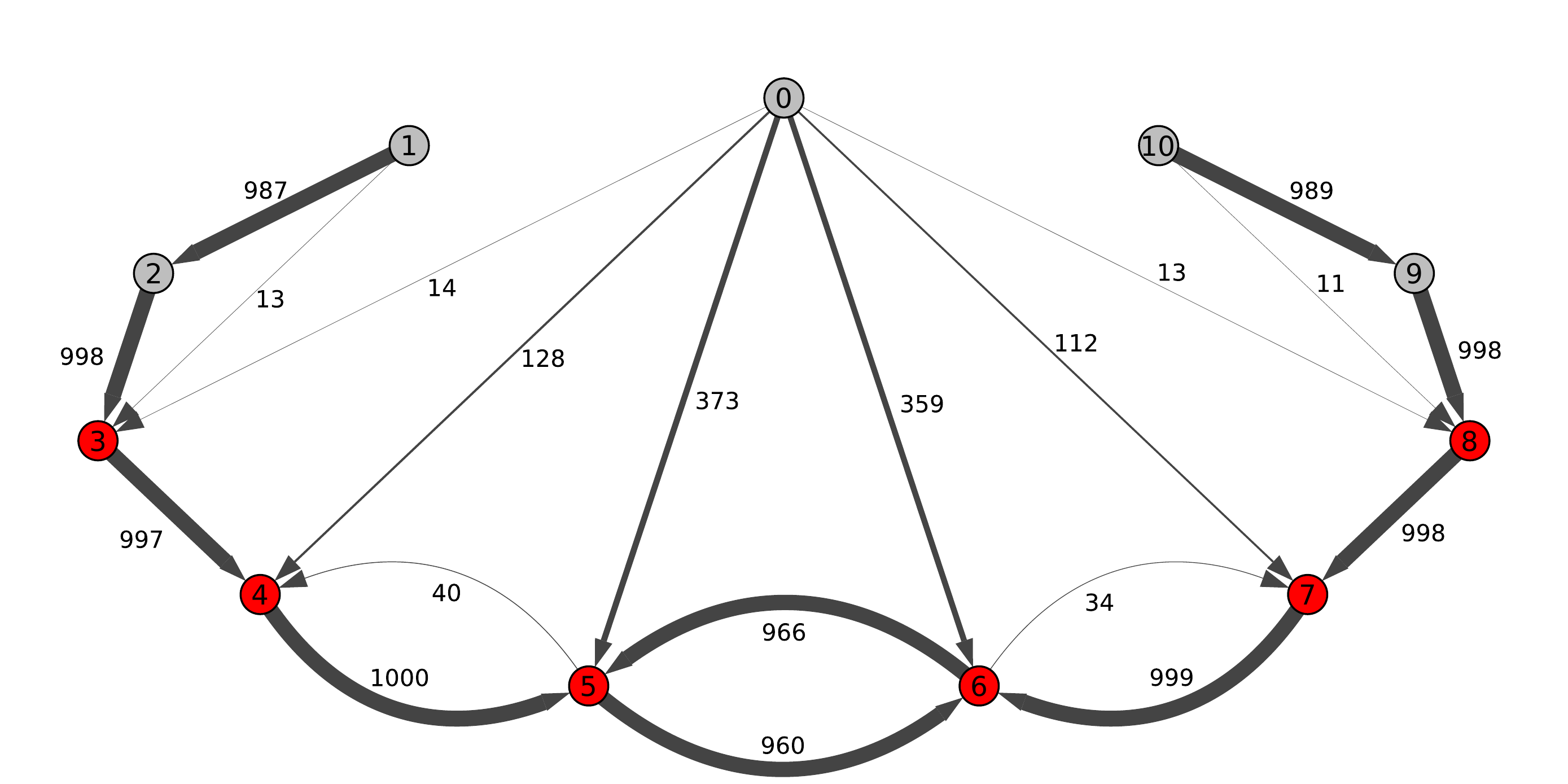}
	\caption{{\bf Attractor switching network.} Each node represents a limit cycle state of a ring of $N=11$ oscillators; grey nodes represent unstable limit cycles, red nodes represent stable limit cycles. The system is prepared in each $k$ state $1,000$ times and we record to which state it switches. The labels and the width of the links represent the transition counts. Links with less than $5$ transitions are not shown. These simulations were run with $\sigma^2=0.1$, $\beta=0.1$, and $\alpha=1$.}
	\label{fig:asn}
\end{figure*}

\red{Finally, we partition the state space into regions enclosing each limit cycle according to wavenumber \(k\) and investigate attractor switching phenomena as characterized by these partition boundaries. In particular, we study the distribution of time needed to escape an attractor, the average times for such a switch to occur, and the overall organization of the attractor switching network (ASN).}

Driven by the observation that attractor switching is accompanied by a phase crossing, we choose to identify a switch as an event when any $\Delta_i$ becomes $0$. More precisely we calculate
\begin{equation}
\label{eq:state_detector}
k = \frac{\sum_i \Delta_i}{2\pi},
\end{equation}
where $\Delta_i\in[0,2\pi)$. Since the oscillators are organized in a ring, $k$ is an integer. If the system is on a deterministic attractor, $k$ is equal to the corresponding wavenumber. Thus, $k$ changes value only when a $\Delta_i$ passes through zero. We therefore coarse grain the state space by assigning the system to be in rotating-wave ``state'' $k$ as defined by Eq.~(\ref{eq:state_detector}). \red{The magnitude of change in \(k\) is precisely equal to the number of adjacent phase difference that pass through zero at a particular time.} Note that this assigns different volumes of state space to different rotating wave states. For example, $k=0$ only if all $\Delta_i=0$, and small fluctuations in the phase differences cause discrete fluctuations in $k$. Hence, this choice of coarse-graining is natural only if $k=0$ is unstable, \textit{i.e.,} $\beta>0$.

We perform measurements of switching by preparing the system in a stable attractor of the deterministic dynamics, letting it evolve until the system switches to another state according to Eq.~(\ref{eq:state_detector}), and then recording the time taken to switch, $t_\T{switch}$, and the new state. In Fig.~\ref{fig:switchtimehist}, we show a histogram of $t_\T{switch}$ based on \red{$10,000$} independent runs for state \red{$k=50$} of a ring of \red{$N=101$} oscillators with $\beta=0.1$ and noise level $\sigma^2=0.1$. We find that the tail of the histogram is consistent with an exponential distribution; the typical time needed to switch is therefore well characterized by the average $\avg{t_\T{switch}}$. \red{The linear analysis prediction of switching probabilities is described
by $p_{\left[t_{1},t_{2}\right]}(\varepsilon_{l},\varepsilon_{u})$
in Eq.~(\ref{eq:ProbabilityInterval}) with the zero cross condition corresponding to $\left[\varepsilon_{l},\varepsilon_{u}\right]=\left(-2\pi k/N,2\pi-2\pi k/N\right)$.
A similar exponential tail is noted between between both curves. The
discrepancy for small $\avg{t_\T{switch}}$,
can be attributed to the linear regime assumption within the $p_{\left[t_{1},t_{2}\right]}(\varepsilon_{l},\varepsilon_{u})$
calculation, specifically the independence of edge states over time.
For small $\avg{t_\T{switch}}$, the simulation
is exhibiting a distribution similar to the hitting time induced by
Brownian motion rather than the independent and identically distributed random variable sampling of the linear analysis.}

In Fig.~\ref{fig:switchtime-k} we show $\avg{t_\T{switch}}$ as a function of $k$ for $N=101$, $\beta=0.1$ and $\sigma^2=0.1$. We find that $\avg{t_\T{switch}}$ is sharply peaked at $k=50$, and vanishes as the system approaches the fully synchronized state $k=0$ or, equivalently, $k=101$. \red{We compare the nonlinear stability measure $\avg{t_\T{switch}}$
to the expected switching time $\mathbb{E}_{T}(\varepsilon_{l},\varepsilon_{u})$ based on the steady state covariance,
where $\left[\varepsilon_{l},\varepsilon_{u}\right]=\left(-2\pi k/N,2\pi-2\pi k/N\right)$,
defined by Eq.~(\ref{eq:ProbabilityInterval}). The general shape and scale of the curves agree
with deviations occurring as the curves depart from $k=50$. As the
dynamics are unstable about the rotating-wave states $k\notin\left[26,75\right]$,
the linear analysis indicates an instantaneous switch compared to the nonlinear case where some time is required to depart from the unstable limit cycle. Deviations
in the stable regime $k\in\left[26,75\right]$, can be attributed
to uncaptured higher order modes in the dynamics and variable size of the linear regime across $k$.}

Finally, we construct the ASN for a ring of $N=11$ oscillators, with $\beta=0.1$ and $\sigma^2=0.1$ by preparing the system in each $k$ rotating-wave state $1,000$ times and recording to which state it switches. We show the ASN in Fig.~\ref{fig:asn}; red nodes represent stable rotating-wave states and gray nodes unstable states. We draw a directed link from node $k_1$ to node $k_2$ if we observed a switch from $k_1$ to $k_2$. The link weight is the count of observed switches. It is unlikely that two $\Delta_i$'s become zero simultaneously, therefore typically switching happens from state $k$ to neighboring states $k\pm 1$. The most unstable state $k=0$ is an exception, because at $k=0$ each $\Delta_i=0$ and this allows switching to any state. \red{In the few other cases where this occurs, the system simply passed through the intermediate partitions within a single time step of simulation. That is, multiple \(\Delta_i\)'s became zero within one \(t_\T{step}\) increment.}  Overall the system evolves towards states where the adjacent oscillators are most out of phase, $k=5$ and $k=6$, and it rarely leaves these states.

\red{ Although we have not proven that our list of limit cycles captures all attractors of the deterministic system, the lack of cycles with low \(\avg{t_\T{switch}}\) in the ASN provide indication that any further attractors are contained within a single partition and are therefore associated with a single wavenumber. }

This example demonstrates that dynamical coarse-graining of the state space is an informative and necessary approach when constructing attractor switching networks for systems with noisy dynamics. Moreover, ASNs provide an insightful description of the complex and high-dimensional dynamics of noisy, multistable systems.

\section{Conclusion}
\label{sec:Conc}

\red{Our long-term goal is to understand the architecture of basins of attraction in large-scale complex dynamical systems and to develop methods that reveal how state-space structures can facilitate driving between basins. Here, we took several key steps toward these larger goals by analyzing in-depth synchronization phenomena in a system of coupled oscillators arranged in a ring topology.}  From the equations governing the evolution of the system, we first predict analytically the different patterns of synchronization that can exist (i.e., rotating wave solutions) and analyze their local stability via the linearization of the governing equations. We then analyze the covariance matrix of deviations around the attracting rotating waves and use this to construct a piecewise quadratic quasi-potential roughly describing the full attractor space. \red{We additionally use this covariance matrix to make predictions about the fluctuations of phase differences.} Although the covariance analysis allows us to analytically calculate a metric for the robustness of each attractor to noise, we turn to simulation to deal with the impact of large noise.  With this, we can explore the mechanisms associated with attractor switching and develop the attractor switching network. Doing so reveals a clear and strong drive towards those rotating waves with wavenumber approximately half the number of nodes, such that adjacent oscillators are nearly out of phase.

The techniques developed here should generalize to other systems and provide an systematic and analytic advance for developing the underlying theory of attractor switching networks.  We intend to further this study by carefully investigating the dynamics of single switches in larger rings, extending our methods to complex networks with richer attractor types, and validating them in NEMS nanoscale device experiments.

\section{Acknowledgements}
\label{sec:Ack}
We thank Mike Cross, Leonardo Due{\~n}as-Osorio, Warren Fon, Matt Matheny, Michael Roukes, and Sam Stanton for helpful discussions. This material is based upon work supported by, or in part by, the U.S. Army Research Laboratory and the U. S. Army Research Office under Multidisciplinary University Research Initiative award W911NF-13-1-0340.

\newpage
\newpage
\newpage
\newpage
\newpage
\newpage
\newpage
\newpage
\newpage
\newpage
\newpage
\newpage


\appendix
\section{Amplitude Bounds on Attractors}
\label{App:AttractorAmplitudes}

Consider the quasi-potential $V=\sum_{i=1}^{n}\left|a_{i}-1\right|$
and assume that $\left|\beta\right|<1/2$ then
\begin{eqnarray*}
\frac{dV}{dt} & = & \sum_{i=1}^{n}\mbox{sgn}(a_{i}-1)\dot{a}_{i}\\
 & = & \sum_{i=1}^{n}\mbox{sgn}(a_{i}-1)\left[-\frac{1}{2}\left(a_{i}-1\right)+\frac{\beta}{2}\Big(a_{i+1}\sin\left(\phi_{i+1}-\phi_{i}\right)+a_{i-1}\sin\left(\phi_{i-1}-\phi_{i}\right)\Big)\right]\\
 & \leq & \sum_{i=1}^{n}-\frac{1}{2}\mbox{sgn}(a_{i}-1)\left(a_{i}-1\right)+\frac{\left|\beta\right|}{2}\big(\left|a_{i+1}\right|+\left|a_{i-1}\right|\big)\\
 & = & \sum_{i=1}^{n}-\frac{1}{2}\left|a_{i}-1\right|+\left|\beta\right|\left|a_{i}\right|
\end{eqnarray*}
and so for $\left\Vert a-\boldsymbol{1}\right\Vert _{1}\geq2\left|\beta\right|\left\Vert a\right\Vert _{1}$
then $dV/dt\leq0$. Hence, the dynamics will converge to the invariant
set $\mathcal{B}=\left\{ a|\left\Vert a-\boldsymbol{1}\right\Vert _{1}\leq2\left|\beta\right|\left\Vert a\right\Vert _{1}\right\} $
\cite{Khalil1996}. The smallest annulus containing $\mathcal{B}$
is $a_{i}\in\left[\frac{1}{1+2\left|\beta\right|},\frac{1}{1-2\left|\beta\right|}\right]$
and so the dynamics will converge to this annulus.

\section{Linearization Stability Equivalence}
\label{App:Linearization}

The linearized dynamics state matrix can be formalized as a series
of Kronecker sums as

\[
F=\frac{1}{2}\left(\left[\begin{array}{cc}
-1 & 0\\
4\alpha & 0
\end{array}\right]\otimes I+\left[\begin{array}{cc}
0 & \beta c\\
-\beta c & 0
\end{array}\right]\otimes L+\left[\begin{array}{cc}
-\beta s & 0\\
0 & -\beta s
\end{array}\right]\otimes M\right),
\]
where $c=\cos\left(2\pi k/N\right)$ and $s=\sin\left(2\pi k/N\right)$.
Now $\left[\begin{array}{c}
0\\
1
\end{array}\right]\otimes\boldsymbol{1}/\sqrt{n}$ is a right eigenvector of $F$, with associated left eigenvector
$\left[\begin{array}{c}
4\alpha\\
1
\end{array}\right]\otimes\boldsymbol{1}/\sqrt{n}$ and unique eigenvalue 0. This follows from $L\boldsymbol{1}=\boldsymbol{1}^{T}L=0$
and $M\boldsymbol{1}=\boldsymbol{1}^{T}M=0$ and by examining the
eigenvectors and eigenvalues of the matrix $\left[\begin{array}{cc}
-1 & 0\\
4\alpha & 0
\end{array}\right]$. Denoting the eigenvalues of an arbitrary matrix $Z$ as $\mu_{1}\left(Z\right),\mu_{2}(Z),\dots$,
where $\left|\mbox{Re}(\mu_{1}(Z))\right|\leq\left|\mbox{Re}(\mu_{2}(Z))\right|\leq\dots$,
then $\mu_{1}(F)=0$.

Consider the matrices
\begin{eqnarray}
F_{1} & = & F-y\left[\begin{array}{cc}
0 & 0\\
4\alpha & 1
\end{array}\right]\otimes\boldsymbol{1}\boldsymbol{1}^{T}/n,\label{eq:Perturbed Linearized Dynamics}\\
F_{2} & = & F_{1}-\frac{1}{2}\left[\begin{array}{cc}
-\beta s & 0\\
0 & -\beta s
\end{array}\right]\otimes M,\mbox{ and}\label{eq:Removed M Linearized Dynamics}\\
\widetilde{F} & = & \lim_{y\rightarrow0}F_{2},\label{eq:Perturbed Removed M}
\end{eqnarray}
then
\begin{eqnarray*}
 & \mbox{Re}(\mu_{i}(F))<0 & \mbox{ for }i\neq1\\
\mbox{\ensuremath{\iff}} & \mbox{Re}(\mu_{i}(F_{1}))<0 & \mbox{ for all }i\mbox{ (By shifting the null space associated with }\mu_{1}(F))\\
\mbox{\ensuremath{\iff}} & \mbox{Re}(\mu_{i}(F_{2}))<0 & \mbox{ for all }i\mbox{ (By Proposition~\ref{Prop: Lyapunov Equivalence})}\\
\iff & \mbox{Re}(\mu_{i}(\widetilde{F}))<0 & \mbox{ for }i\neq1\mbox{ (By shifting the null space associated with }\mu_{1}(\widetilde{F})).\\
\iff & \mbox{Re}\left(-1\pm\sqrt{16\alpha x_{i}-4x_{i}^{2}+1}\right)<0 & \mbox{ for }i\neq1\mbox{ (By Proposition~\ref{Prop: Eigenvalues of A2})}\\
\iff & 4\alpha x_{i}-x_{i}^{2}<0 & \mbox{ for }i\neq1.
\end{eqnarray*}

\begin{prop*}
\label{Prop: Lyapunov Equivalence}Consider $Q$ positive semidefinite,
$P$ positive definite, $y>0$ and Hurwitz matrix $F_{2}$. The matrix
$P$ satisfies $F_{2}P+PF_{2}^{T}=-Q\otimes I$ if and only if $F_{1}P+PF_{1}^{T}=-Q\otimes I$.\end{prop*}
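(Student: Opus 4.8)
The plan is to exploit that the correction $S:=F_{2}-F_{1}$ is \emph{skew-symmetric} and \emph{commutes} with $F_{1}$, $F_{2}$, and $Q\otimes I$; this forces the two Lyapunov equations to coincide on any common candidate $P$. From the definitions, $S=F_{2}-F_{1}=\tfrac{1}{2}\beta\sin(2\pi k/N)\,(I_{2}\otimes M)$. First I would record two structural facts. (i) Skew-symmetry: from its definition $M_{ij}=-M_{ji}$, so $M^{T}=-M$, and since $I_{2}^{T}=I_{2}$ we get $S^{T}=-S$. (ii) Commutation: in the Kronecker-sum form of $F$, the matrix $I_{2}\otimes M$ commutes with the first term (ring-side factor $I$), with the second term (ring-side factor $L$, which is circulant like $M$, so $ML=LM$), and with the third term (which is proportional to $I_{2}\otimes M$ itself); it also commutes with the $y$-correction $(\cdots)\otimes(\mathbf{1}\mathbf{1}^{T}/n)$ because $\mathbf{1}\mathbf{1}^{T}$ is circulant. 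Hence $[S,F_{1}]=0$, and therefore $[S,F_{2}]=[S,F_{1}+S]=0$; likewise $(I_{2}\otimes M)(Q\otimes I)=Q\otimes M=(Q\otimes I)(I_{2}\otimes M)$.

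Next I would observe that $F_{1}$ is Hurwitz as well: since $S$ and $F_{2}$ commute, they are simultaneously triangularizable over $\mathbb{C}$, so the eigenvalues of $F_{1}=F_{2}-S$ are $\mu_{i}(F_{2})-\nu_{i}$ for a matching ordering of the eigenvalues $\nu_{i}$ of $S$; as $S$ is real skew-symmetric, $\nu_{i}\in i\mathbb{R}$, whence $\mathrm{Re}\,\mu_{i}(F_{1})=\mathrm{Re}\,\mu_{i}(F_{2})<0$. In particular each Lyapunov operator $\mathcal{L}_{F_{j}}(X):=F_{j}X+XF_{j}^{T}$ ($j=1,2$) is nonsingular. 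The core is then a short computation for either $F\in\{F_{1},F_{2}\}$: if $FP+PF^{T}=-Q\otimes I$, put $C:=SP-PS$; using $SF=FS$ and, after transposing, $SF^{T}=F^{T}S$,
\[
\mathcal{L}_{F}(C)=F(SP-PS)+(SP-PS)F^{T}=S(FP+PF^{T})-(FP+PF^{T})S=-S(Q\otimes I)+(Q\otimes I)S=0 ,
\]
so $C=0$, that is $SP=PS$. Then, using $S^{T}=-S$,
\[
F_{1}P+PF_{1}^{T}=(F_{2}-S)P+P(F_{2}-S)^{T}=F_{2}P+PF_{2}^{T}-(SP-PS)=F_{2}P+PF_{2}^{T},
\]
which yields the equivalence (apply the computation with $F=F_{2}$ for the forward direction and with $F=F_{1}$ for the reverse). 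Note that positive definiteness of $P$ and positive semidefiniteness of $Q$ are not actually used.

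The only delicate point is fact (ii): one must see that once the two $M$-terms cancel, $F_{2}$ is assembled solely from the ring matrices $I$, $L$, and $\mathbf{1}\mathbf{1}^{T}$, which are simultaneously diagonalized by the discrete Fourier transform and hence commute with the circulant, skew-symmetric $M$ — the ring's circulant structure is what the whole argument rests on. The one additional thing not to overlook is that the reverse direction needs $F_{1}$ Hurwitz (so that $\mathcal{L}_{F_{1}}$ may be inverted), which is why the spectral remark that $S$ has purely imaginary spectrum is made before the main computation.
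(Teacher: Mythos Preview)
Your proof is correct and reaches the same destination as the paper---the commutation $(I\otimes M)P=P(I\otimes M)$, from which the two Lyapunov equations collapse into one---but by a different route. The paper argues via the ring automorphism: it shows that the cyclic shift $E$ satisfies $(I\otimes E)F_{2}=F_{2}(I\otimes E)$, conjugates the Lyapunov equation by $I\otimes E$, and invokes uniqueness of the solution (from $F_{2}$ Hurwitz) to conclude $(I\otimes E)P=P(I\otimes E)$; since $M=E-E^{T}$, the needed commutation follows. You instead abstract the mechanism: $S$ is skew-symmetric and commutes with $F_{2}$ and with $Q\otimes I$, so the commutator $C=SP-PS$ lies in the kernel of $\mathcal{L}_{F_{2}}$, hence vanishes. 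Your argument is a bit more general (it would apply to any skew perturbation commuting with the base matrix and the right-hand side, not only to those coming from graph automorphisms), and you also tidy up two points the paper leaves implicit: you handle the reverse implication by first checking that $F_{1}$ is Hurwitz via simultaneous triangularization, and you note that the positivity hypotheses on $P$ and $Q$ are not actually used. The paper's version, by contrast, makes the role of the ring symmetry more visible, which is conceptually useful elsewhere in the appendix where the same permutation $E$ reappears.
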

\begin{proof}
Consider the permutation matrix $E\in\mathbb{R}^{n\times n}$ defined
as $E_{n,i}=1$, $E_{i,i+1}=1$ for $i=1,\dots,n-1$ and $E_{ij}=0$
otherwise. The action of the permutation matrix \(E\) on a vector corresponds
to a mapping of element $i$ to element $i-1$ (mod $n$), and corresponds
to the rotational automorphism on an $n$ node ring graph \cite{Diestel2000}.
As $E$ represents an automorphism of the graph, $ELE^{T}=L$ and $EME^{T}=M$.

From the Lyapunov equation $F_{2}P+PF_{2}^{T}=-Q\otimes I,$ as $E$
is a permutation matrix on $L$, then $\left(I\otimes E\right)F_{2}=F_{2}\left(I\otimes E\right),$
and

\begin{widetext}

\begin{eqnarray*}
\left(I\otimes E\right)\left(F_{2}P+PF_{2}^{T}\right)\left(I\otimes E^{T}\right) & = & -\left(I\otimes E\right)Q\otimes I\left(I\otimes E^{T}\right)\\
F_{2}\left(I\otimes E\right)P\left(I\otimes E^{T}\right)+\left(I\otimes E\right)P\left(I\otimes E^{T}\right)F_{2}^{T} & = & -Q\otimes I\\
F_{2}\tilde{P}+\tilde{P}F_{2}^{T} & = & -Q\otimes I.
\end{eqnarray*}

As $F_{2}$ is Hurwitz, the solution to the Lyapunov equation is unique
\cite{Horn1990}. Hence, $\tilde{P}=P$ and $\left(I\otimes E\right)P=P\left(I\otimes E\right)$
and $\left(I\otimes E^{T}\right)P=P\left(I\otimes E^{T}\right)$.
Therefore,
\[
\left(I\otimes M\right)P=\left(I\otimes\left(E-E^{T}\right)\right)P=P\left(I\otimes\left(E-E^{T}\right)\right)=P\left(I\otimes M\right),
\]
and
\begin{eqnarray*}
F_{1}^{T}P+PF_{1} & = & \left(F_{2}+\frac{1}{2}\beta sI\otimes M\right)^{T}P+P\left(F_{2}+\frac{1}{2}\beta sI\otimes M\right)\\
 & = & F_{2}^{T}P+PF_{2}+\frac{1}{2}\beta s\left[\left(I\otimes M\right)^{T}P+P\left(I\otimes M\right)\right]\\
 & = & -I+\frac{1}{2}\beta s\left[-\left(I\otimes M\right)P+\left(I\otimes M\right)P\right]=-I.
\end{eqnarray*}

\end{widetext}

\end{proof}

\begin{prop*}
\label{Prop: Eigenvalues of A2}The eigenvectors of $\widetilde{F}$
are of the form $v_{1i}\otimes w_{i}$ and $v_{2i}\otimes w_{i}$
where $w_{i}$ is a unit eigenvector of $L$ and $v_{1i}$ and $v_{2i}$
are the eigenvectors of the matrix
\begin{equation}
\frac{1}{2}\left(\left[\begin{array}{cc}
-1 & 0\\
4\alpha & 0
\end{array}\right]+x_{i}\left[\begin{array}{cc}
0 & 1\\
-1 & 0
\end{array}\right]\right).\label{eq:Base eigenvector matrix}
\end{equation}
The associated eigenvalues of \(\tilde{F}\) are $\mu_{11}=0$, $\mu_{21}=-\frac{1}{2}$,
and
\begin{equation}
\mu_{1i,2i}=\frac{1}{4}\left(-1\pm\sqrt{16\alpha x_{i}-4x_{i}^{2}+1}\right)\label{eq:Eigenvalues in closed form}
\end{equation}
for $i\neq1$. Here, $x_{i}=\beta c\lambda_{i}$ where $0\leq\lambda_{1}\leq\dots\leq\lambda_{n}$
are the eigenvalues of $L$.\end{prop*}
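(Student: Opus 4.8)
The plan is to block-diagonalize $\widetilde F$ using its Kronecker-sum structure together with the orthogonal diagonalizability of the ring Laplacian $L$. First I would write $\widetilde F$ out explicitly: starting from the Kronecker-sum form of $F$ recorded at the top of this appendix, deleting the $M$-term as in Eq.~(\ref{eq:Removed M Linearized Dynamics}) exactly cancels the $\frac12[\,\cdot\,]\otimes M$ contribution already present in $F$, and taking $y\to0$ as in Eq.~(\ref{eq:Perturbed Removed M}) kills the rank-one correction $y[\,\cdot\,]\otimes\boldsymbol 1\boldsymbol 1^{T}/n$, so that
\[
\widetilde F=\frac12\left(\left[\begin{array}{cc}-1&0\\4\alpha&0\end{array}\right]\otimes I+\left[\begin{array}{cc}0&\beta c\\-\beta c&0\end{array}\right]\otimes L\right),\qquad c=\cos(2\pi k/N).
\]
Since $L$ is the undirected ring Laplacian it is real symmetric and hence has an orthonormal eigenbasis $w_{1},\dots,w_{n}$ with $Lw_{i}=\lambda_{i}w_{i}$, $0=\lambda_{1}\le\cdots\le\lambda_{n}$, and $w_{1}=\boldsymbol 1/\sqrt n$.

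Next, I would apply the Kronecker mixed-product rule $(P\otimes I)(v\otimes w_{i})=(Pv)\otimes w_{i}$ and $(P\otimes L)(v\otimes w_{i})=\lambda_{i}(Pv)\otimes w_{i}$ to conclude that $\widetilde F(v\otimes w_{i})=(\widetilde A_{i}v)\otimes w_{i}$, where $\widetilde A_{i}$ is the $2\times2$ matrix of Eq.~(\ref{eq:Base eigenvector matrix}) evaluated at $x_{i}=\beta c\,\lambda_{i}$. Thus each plane $\mathbb R^{2}\otimes w_{i}$ is $\widetilde F$-invariant, $\widetilde F$ restricted to the $i$-th plane acts as $\widetilde A_{i}$, and therefore the eigenvectors of $\widetilde F$ are precisely $v_{1i}\otimes w_{i}$ and $v_{2i}\otimes w_{i}$ with $v_{1i},v_{2i}$ the eigenvectors of $\widetilde A_{i}$, and the eigenvalues of $\widetilde F$ coincide with those of the $\widetilde A_{i}$. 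These $2n$ eigenvectors are linearly independent because the $w_{i}$ are, even when $\lambda_{i}$ is a repeated Laplacian eigenvalue, so the list is complete. To finish, I would read off the spectrum of the block $\widetilde A_{i}=\tfrac12\left[\begin{array}{cc}-1&x_{i}\\4\alpha-x_{i}&0\end{array}\right]$ from its trace $-\tfrac12$ and determinant $-\tfrac14(4\alpha x_{i}-x_{i}^{2})$: the characteristic equation $\mu^{2}+\tfrac12\mu-\tfrac14(4\alpha x_{i}-x_{i}^{2})=0$ gives $\mu_{1i,2i}=\tfrac14\!\left(-1\pm\sqrt{16\alpha x_{i}-4x_{i}^{2}+1}\right)$. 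For $i=1$ one has $\lambda_{1}=0$, so $x_{1}=0$ and $\widetilde A_{1}=\tfrac12\left[\begin{array}{cc}-1&0\\4\alpha&0\end{array}\right]$ is lower triangular with eigenvalues $\mu_{11}=0$ (eigenvector $[0,1]^{T}$, which recovers the known neutral mode $[0,1]^{T}\otimes\boldsymbol 1/\sqrt n$) and $\mu_{21}=-\tfrac12$ (eigenvector $[-1,4\alpha]^{T}$) — exactly the claimed values.

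I do not expect a genuine obstacle: the argument is just a simultaneous block reduction, and the three points that need a line of care are (i) confirming the $y\to0$ limit really strips off the rank-one term so $\widetilde F$ equals the $M$-free part of $F$, which is immediate from Eqs.~(\ref{eq:Perturbed Linearized Dynamics})--(\ref{eq:Perturbed Removed M}); (ii) invoking symmetry of the undirected ring Laplacian so that $\{w_{i}\}$ is a full eigenbasis, which is what makes the plane-by-plane splitting exhaustive; and (iii) the degenerate locus $16\alpha x_{i}-4x_{i}^{2}+1=0$, where $\widetilde A_{i}$ may fail to be diagonalizable — but this is a codimension-one condition on $(\alpha,\beta,k)$ and does not affect the sign computation the proposition feeds into in Sec.~\ref{Sec:Stability}.
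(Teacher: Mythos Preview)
Your proposal is correct and follows essentially the same approach as the paper: both exploit the Kronecker-sum structure of $\widetilde F$ and the eigenbasis $\{w_i\}$ of $L$ to reduce to the $2\times 2$ blocks $\widetilde A_i$, then read off the eigenvalues. Your write-up is somewhat more detailed than the paper's (you spell out the trace/determinant computation, note completeness of the eigenbasis, and flag the measure-zero degenerate case), but the argument is the same.
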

\begin{proof}
This result follows as
\begin{eqnarray*}
\widetilde{F}(v_{1i}\otimes w_{i}) & = & \frac{1}{2}\left(\left[\begin{array}{cc}
-1 & 0\\
4\alpha & 0
\end{array}\right]\otimes I+\left[\begin{array}{cc}
0 & \beta c\\
-\beta c & 0
\end{array}\right]\otimes L\right)v_{1i}\otimes w_{i}\\
 & = & \frac{1}{2}\left(\left[\begin{array}{cc}
-1 & 0\\
4\alpha & 0
\end{array}\right]v_{1i}\otimes w_{i}+\left[\begin{array}{cc}
0 & 1\\
-1 & 0
\end{array}\right]v_{1i}\otimes\left(\beta cL\right)w_{i}\right)\\
 & = & \frac{1}{2}\left(\left[\begin{array}{cc}
-1 & 0\\
4\alpha & 0
\end{array}\right]v_{1i}\otimes w_{i}+x_{i}\left[\begin{array}{cc}
0 & 1\\
-1 & 0
\end{array}\right]v_{1i}\otimes w_{i}\right)\\
 & = & \frac{1}{2}\left(\left[\begin{array}{cc}
-1 & 0\\
4\alpha & 0
\end{array}\right]+x_{i}\left[\begin{array}{cc}
0 & 1\\
-1 & 0
\end{array}\right]\right)v_{1i}\otimes w_{i}=\mu_{1i}v_{1i}\otimes w_{i}.
\end{eqnarray*}
Hence, by solving for the eigenvalues of matrix (\ref{eq:Base eigenvector matrix}),
the eigenvalues of $\widetilde{F}$ in closed form follow.
\end{proof}

\section{Small Noise Covariance}
\label{App:Covariance}

The covariance matrix $P'$ associated with noise driven dynamics
(\ref{eq:Linearized Dynamics with Noise}) with $w\sim\mathcal{N}(0,\sigma^{2}I)$
can be found using the Lyapunov equation
\[
F_{1}^{T}P+PF_{1}+Q\otimes I=0,
\]
where $Q=\left[\begin{array}{cc}
0 & 0\\
0 & \sigma^{2}
\end{array}\right]$ \cite{Skogestad2005} and then taking the limit of $P'=\lim_{y\rightarrow0}P$
. From Proposition \ref{Prop: Lyapunov Equivalence}, $P$ satisfies $F_{2}^{T}P+PF_{2}=-Q\otimes I.$ Let $F_{2}=\left(V\otimes W\right)\Lambda\left(V^{-1}\otimes W^{-1}\right)$
where $V\otimes W$ represents the eigenvectors of $F_{2}$ and $\Lambda$
the diagonal matrix of its eigenvalues. Further, as $L$ is symmetric
then $WW^{T}=I$. Hence,
\begin{eqnarray*}
0 & = & F_{2}^{T}P+PF_{2}+Q\otimes I\\
 & = & \left(V^{-T}\otimes W^{-T}\right)\Lambda\left(V^{T}\otimes W^{T}\right)P+P\left(V\otimes W\right)\Lambda\left(V^{-1}\otimes W^{-1}\right)+Q\otimes I.
\end{eqnarray*}
Multiplying on the left and right by $I\otimes W^{T}$ and $I\otimes W$,
respectively, and applying the condition $W^{T}=W^{-1}$, we have
\begin{eqnarray*}
0 & = & \left[I\otimes W^{T}\right]\left(V^{-T}\otimes W^{-T}\right)\Lambda\left(V^{T}\otimes W^{T}\right)P\left[I\otimes W\right]\\
 &  & +\left[I\otimes W^{T}\right]P\left(V\otimes W\right)\Lambda\left(V^{-1}\otimes W^{-1}\right)\left[I\otimes W\right]+\left[I\otimes W^{T}\right]Q\otimes I\left[I\otimes W\right]\\
 & = & \left(V^{-T}\otimes I\right)\Lambda\left(V^{T}\otimes I\right)\left[I\otimes W^{T}\right]P\left[I\otimes W\right]\\
 &  & +\left[I\otimes W^{T}\right]P\left[I\otimes W\right]\left(V\otimes I\right)\Lambda\left(V^{-1}\otimes I\right)+Q\otimes I.
\end{eqnarray*}
Let $\tilde{P}=\left[I\otimes W^{T}\right]P\left[I\otimes W\right]$
then
\[
\left(V^{-T}\otimes I\right)\Lambda\left(V^{T}\otimes I\right)\tilde{P}+\tilde{P}\left(V\otimes I\right)\Lambda\left(V^{-1}\otimes I\right)=-Q\otimes I,
\]
equivalently after row/column permutations then
\begin{eqnarray*}
D\left[F_{is}\right]^{T}D\left[\tilde{P}_{is}\right]+D\left[\tilde{P}_{is}\right]D\left[F_{is}\right] & = & -Q\otimes I,
\end{eqnarray*}
where $D\left[F_{is}\right]=\left[\begin{array}{ccc}
F_{1s} &  & 0\\
 & F_{2s}\\
0 &  & \ddots
\end{array}\right]$. From Prop.~\ref{Prop: Eigenvalues of A2}, the eigenvectors of $F_{2}$
are $v_{1i}\otimes w_{i}$ and $v_{2i}\otimes w_{i}$ with $Lw_{i}=\lambda_{i}w_{i}$.
Consequently, for $i\neq1$ with $\lambda_{i}\neq0$ then $F_{is}=\frac{1}{2}\left[\begin{array}{cc}
-1 & x_{i}\\
4\alpha-x_{i} & 0
\end{array}\right]$ and for $i=1$ with $\lambda_{1}=0$ then $F_{is}=\frac{1}{2}\left[\begin{array}{cc}
-1 & 0\\
-4\alpha(1-y) & -y
\end{array}\right]$. For $\lambda_{i}\neq0$ , then

\begin{eqnarray*}
F_{is}^{T}\tilde{P}_{is}+\tilde{P}_{is}F_{is} & = & -\left[\begin{array}{cc}
0 & 0\\
0 & \sigma^{2}
\end{array}\right]\\
\frac{1}{2}\left[\begin{array}{cc}
-1 & 4\alpha-x_{i}\\
x_{i} & 0
\end{array}\right]\left[\begin{array}{cc}
p_{11} & p_{12}\\
p_{12} & p_{22}
\end{array}\right]+\frac{1}{2}\left[\begin{array}{cc}
p_{11} & p_{12}\\
p_{12} & p_{22}
\end{array}\right]\left[\begin{array}{cc}
-1 & x_{i}\\
4\alpha-x_{i} & 0
\end{array}\right] & = & -\left[\begin{array}{cc}
0 & 0\\
0 & \sigma^{2}
\end{array}\right]
\end{eqnarray*}
so
\[
\tilde{P_{is}}=-\frac{\sigma^{2}}{x_{i}}\left[\begin{array}{cc}
4\alpha-x_{i} & 1\\
1 & \left(1-4\alpha x_{i}+x_{i}^{2}\right)/\left(4\alpha-x_{i}\right)
\end{array}\right].
\]
Similarly, for $i=1$ then

\[
\tilde{P}_{1s}=\frac{\sigma^{2}}{y\left(1+y\right)}\left[\begin{array}{cc}
16\alpha^{2}\left(y-1\right)^{2} & 4\alpha\left(y-1\right)\\
4\alpha\left(y-1\right) & \left(1+y\right)
\end{array}\right]
\]
with $\lim_{y\rightarrow0}\tilde{P}_{1s}=\left[\begin{array}{cc}
0 & 0\\
0 & \infty
\end{array}\right],$ and its associated eigenvalue is $\left\{ 0,\infty\right\} .$

The trace of $P$ without the modes associated with $\left\{ 0,\infty\right\} $
denoted as $\mbox{tr}_{*}P$ is
\red{
\begin{eqnarray*}
\mbox{tr}_{*}P & = & \mbox{tr}_{*}\left[I\otimes W\right]\tilde{P}\left[I\otimes W^{T}\right]=\mbox{tr}_{*}\left[I\otimes W^{T}\right]\left[I\otimes W\right]\tilde{P}\\
 & = & \mbox{tr}_{*}\tilde{P}=\sum_{i=2}^{N}\mbox{tr}(\tilde{P}_{is})=-\sum_{i=2}^{N}\frac{\sigma^{2}}{x_{i}}\left(4\alpha-x_{i}+\frac{1-4\alpha x_{i}+x_{i}^{2}}{4\alpha-x_{i}}\right)\\
 & = & -\sum_{i=2}^{N}\frac{\sigma^{2}}{x_{i}}\left(4\alpha-x_{i}-x_{i}+\frac{1}{4\alpha-x_{i}}\right)=2\sigma^{2}\sum_{i=2}^{N}\left(1-\frac{2\alpha}{x_{i}}-\frac{1}{2x_{i}\left(4\alpha-x_{i}\right)}\right)\\
 & = & 2\sigma^{2}\left(N-1-\sum_{i=2}^{N}\left(\frac{2\alpha}{x_{i}}+\frac{1}{2x_{i}\left(4\alpha-x_{i}\right)}\right)\right).
\end{eqnarray*}
On the ring network, $x_{i}=\beta\cos\frac{2\pi k}{N}\lambda_{i}$
where $\left\{ \lambda_{2},\dots,\lambda_{N}\right\} =\left\{ 4\sin^{2}\frac{\pi}{N},4\sin^{2}\frac{2\pi}{N},\dots,4\sin^{2}\frac{\pi\left(N-1\right)}{N}\right\} $.
Using the relation $\sum_{i=1}^{N-1}\csc^{2}\frac{i\pi}{N}=(N^{2}-1)/3$,
this trace is further simplified as,
\begin{eqnarray*}
\mbox{tr}_{*}P & = & 2\sigma^{2}\left(N-1-\frac{\alpha(N^{2}-1)}{6\beta\cos\frac{2\pi k}{N}}-\frac{1}{32\beta\cos\frac{2\pi k}{N}}\sum_{i=1}^{N-1}\csc^{2}\frac{i\pi}{N}\frac{1}{\alpha-\beta\cos\frac{2\pi k}{N}\sin^{2}\frac{i\pi}{N}}\right)\\
 & = & 2\sigma^{2}\left(N-1\right)\left(1-\frac{\alpha(N+1)}{6\beta\cos\frac{2\pi k}{N}}-\frac{\alpha\left(N+1\right)}{6\beta\cos\frac{2\pi k}{N}}\Gamma(N,k,\alpha,\beta)\right),\\
 & = & 2\sigma^{2}\left(N-1\right)\left(1-\frac{\alpha\left(N+1\right)}{6\beta\cos\frac{2\pi k}{N}}\left[1+\Gamma(N,k,\alpha,\beta)\right]\right)
\end{eqnarray*}
where $\Gamma(N,k,\alpha,\beta)=\frac{3}{16\alpha(N^{2}-1)}\sum_{i=1}^{N-1}\csc^{2}\frac{i\pi}{N}\left(\alpha-\beta\cos\frac{2\pi k}{N}\sin^{2}\frac{i\pi}{N}\right)^{-1}$.

As $\beta\cos\frac{2\pi k}{N}\in\left[-1,0\right]$, $\left(\alpha-\beta\cos\frac{2\pi k}{N}\sin^{2}\frac{i\pi}{N}\right)^{-1}\in\left[\left(\alpha-\beta\cos\frac{2\pi k}{N}\right)^{-1},\alpha^{-1}\right]$
and $\sum_{i=1}^{N-1}\csc^{2}\frac{i\pi}{N}=(N^{2}-1)/3,$ then $\Gamma(N,k,\alpha,\beta)\in\left[\left(16\alpha\left(\alpha-\beta\cos\frac{2\pi k}{N}\right)\right)^{-1},\left(16\alpha^{2}\right)^{-1}\right]\subseteq\left(16\alpha\right)^{-1}\left[\left(\alpha+\left|\beta\right|\right)^{-1},\left(\alpha\right)^{-1}\right]$.
}

\red{
\section{Interval Exit Probability}
\label{App:ProbabilityInterval}

The perturbed edge states on a ring about an equilibrium defined by
phase offsets $\Delta_{k}$ is described by the states
\[
\delta e_{i}=\phi_{i+1}-\phi_{i}-\Delta_{k}=\delta\phi_{i+1}-\delta\phi_{i},
\]
or compactly by $\delta e=\left[\begin{array}{cc}
0 & E-I\end{array}\right]\left[\begin{array}{c}
\delta a\\
\delta\phi
\end{array}\right]$, where $E$ is defined in Appendix \ref{App:Covariance}. Consequently, the covariance
matrix $P_{e}=\mathbb{E}\left(\delta e\delta e^{T}\right)$ can be
found by a projection of the covariance matrix $P'=\mathbb{E}\left(\left[\begin{array}{c}
\delta a\\
\delta\phi
\end{array}\right]\left[\begin{array}{c}
\delta a\\
\delta\phi
\end{array}\right]^{T}\right)$ as
\[
P_{e}=\left[\begin{array}{cc}
0 & E-I\end{array}\right]P'\left[\begin{array}{cc}
0 & E-I\end{array}\right]^{T}.
\]
The trace of $P_{e}$ without the mode associated with
the undamped subspace spanned by $\delta e=\boldsymbol{1}$ is denoted
as $\mbox{tr}_{*}P_{e}$. From Appendix \ref{App:Covariance}, noting that $\left(E^{T}-I\right)\left(E-I\right)=L$,  
$\mbox{tr}_{*}(\left[\begin{array}{cc}
0 & E-I\end{array}\right]P'\left[\begin{array}{cc}
0 & E-I\end{array}\right]^{T})=\mbox{tr}_{*}(\left[\begin{array}{cc}
0 & E-I\end{array}\right]P\left[\begin{array}{cc}
0 & E-I\end{array}\right]^{T})$ and applying the closed form solution for $P$
then

\begin{eqnarray*}
\mbox{tr}_{*}P_{e} & = & \mbox{tr}_{*}\left(\left[\begin{array}{cc}
0 & E-I\end{array}\right]P\left[\begin{array}{c}
0\\
E^{T}-I
\end{array}\right]\right)=\mbox{tr}_{*}\left(\left[\begin{array}{cc}
0 & 0\\
0 & \left(E^{T}-I\right)\left(E-I\right)
\end{array}\right]\left(I\otimes W\right)\tilde{P}\left(I\otimes W^{T}\right)\right)\\
 & = & \mbox{tr}_{*}\left(\left[\begin{array}{cc}
0 & 0\\
0 & L
\end{array}\right]\left(I\otimes W\right)\tilde{P}\left(I\otimes W^{T}\right)\right)=\mbox{tr}_{*}\left(\left(I\otimes W^{T}\right)\left(\left[\begin{array}{cc}
0 & 0\\
0 & 1
\end{array}\right]\otimes L\right)\left(I\otimes W\right)\tilde{P}\right)\\
 & = & \mbox{tr}_{*}\left(\left(\left[\begin{array}{cc}
0 & 0\\
0 & 1
\end{array}\right]\otimes\Lambda\right)\tilde{P}\right)=\sum_{i=2}^{N}-\frac{\sigma^{2}}{x_{i}}\mbox{tr}_{*}\left(\left[\begin{array}{cc}
0 & 0\\
0 & \lambda_{i}
\end{array}\right]\left[\begin{array}{cc}
4\alpha-x_{i} & 1\\
1 & \left(1-4\alpha x_{i}+x_{i}^{2}\right)/\left(4\alpha-x_{i}\right)
\end{array}\right]\right)\\
 & = & \sum_{i=2}^{N}-\frac{\sigma^{2}}{x_{i}}\mbox{tr}\left(\left[\begin{array}{cc}
0 & 0\\
\lambda_{i} & \lambda_{i}\left(1-4\alpha x_{i}+x_{i}^{2}\right)/\left(4\alpha-x_{i}\right)
\end{array}\right]\right)=\sum_{i=2}^{N}-\frac{\sigma^{2}}{x_{i}}\frac{\lambda_{i}\left(1-4\alpha x_{i}+x_{i}^{2}\right)}{4\alpha-x_{i}}\\
 & = & -\frac{\sigma^{2}}{\beta c}\sum_{i=2}^{N}\frac{1-4\alpha x_{i}+x_{i}^{2}}{4\alpha-x_{i}}=\frac{\sigma^{2}}{\beta c}\sum_{i=2}^{N}x_{i}-\frac{1}{4\alpha-x_{i}}\\
 & = & \sigma^{2}\left(\sum_{i=2}^{N}\lambda_{i}-\frac{1}{\beta c}\sum_{i=2}^{N}\frac{1}{4\alpha-\lambda_{i}\beta c}\right).
\end{eqnarray*}
For the ring graph due to the underlying symmetry in the $\delta e_{i}$
states then $\bar{\sigma}^{2}:=\mathbb{E}(e_{1}^{2})=\mathbb{E}(e_{2}^{2})=\dots=\mathbb{E}(e_{N}^{2})$
and so $\bar{\sigma}^{2}=\mbox{tr}_{*}P_{e}/N$. For a ring graph
then $\left\{ \lambda_{2},\dots,\lambda_{N}\right\} = \\ \left\{ 4\sin^{2}\frac{\pi}{N},4\sin^{2}\frac{2\pi}{N},\dots,4\sin^{2}\frac{\pi\left(N-1\right)}{N}\right\} $ and
$\sum_{i=2}^{N}\lambda_{i}=2N$, so
\[
\bar{\sigma}^{2}=\sigma^{2}\left(2-\frac{1}{4\beta N\cos\frac{2\pi k}{N}}\sum_{i=1}^{N-1}\frac{1}{\alpha-\beta\cos\frac{2\pi k}{N}\sin^{2}\frac{i\pi}{N}}\right).
\]
Let the probability that the random variable $\delta e_{i}\sim \mathcal{N}(0,\bar{\sigma}^{2})$
remains in the bounded interval $\left[\varepsilon_{l},\varepsilon_{u}\right]$
be $p_{i}(\varepsilon_{l},\varepsilon_{u})$. This probability can
be calculated using the cumulative distribution function $F(\cdot)$ of
the Gaussian distribution and the error function $\mbox{erf}(\cdot)$
as
\begin{eqnarray*}
p_{i}(\varepsilon_{l},\varepsilon_{u}) & = & F(\varepsilon_{u})-F\left(\varepsilon_{l}\right)\\
 & = & \frac{1}{2}\left[1+\mbox{erf}\left(\frac{\varepsilon_{u}}{\bar{\sigma}\sqrt{2}}\right)\right]-\frac{1}{2}\left[1+\mbox{erf}\left(\frac{\varepsilon_{l}}{\bar{\sigma}_{k}\sqrt{2}}\right)\right]\\
 & = & \frac{1}{2}\left[\mbox{erf}\left(\frac{\varepsilon_{u}}{\bar{\sigma}\sqrt{2}}\right)-\mbox{erf}\left(\frac{\varepsilon_{l}}{\bar{\sigma}\sqrt{2}}\right)\right].
\end{eqnarray*}
Assuming that cross-coupling between $\delta e_{i}$'s are small, the probability
of all edge states remaining bounded $p(\varepsilon_{l},\varepsilon_{u})$
can be approximated as
\[
p(\varepsilon_{l},\varepsilon_{u})\approx p_{i}(\varepsilon_{l},\varepsilon_{u})^{N}=\frac{1}{2^{N}}\left[\mbox{erf}\left(\frac{\varepsilon_{u}}{\bar{\sigma}\sqrt{2}}\right)-\mbox{erf}\left(\frac{\varepsilon_{l}}{\bar{\sigma}\sqrt{2}}\right)\right]^{N}.
\]
The probability of exiting the interval $\left[\varepsilon_{l},\varepsilon_{u}\right]$
by time $T$ given a sampling interval $\Delta t$ is then
\begin{eqnarray*}
p_{\left[0,T\right]}(\varepsilon_{l},\varepsilon_{u}) & = & \sum_{k=1}^{\left\lfloor T/\Delta t\right\rfloor }p(\varepsilon_{l},\varepsilon_{u})^{k-1}p(\varepsilon_{l},\varepsilon_{u})\\
 & = & 1-p(\varepsilon_{l},\varepsilon_{u})^{\left\lfloor T/\Delta t\right\rfloor },
\end{eqnarray*}
and consequently the probability of first exiting in the time span $\left[t_{1},t_{2}\right]$
is
\[
p_{\left[t_{1},t_{2}\right]}(\varepsilon_{l},\varepsilon_{u})=p_{\left[0,t_{2}\right]}(\varepsilon_{l},\varepsilon_{u})-p_{\left[0,t_{1}\right]}(\varepsilon_{l},\varepsilon_{u})=p(\varepsilon_{l},\varepsilon_{u})^{\left\lfloor t_{1}/\Delta t\right\rfloor }-p(\varepsilon_{l},\varepsilon_{u})^{\left\lfloor t_{2}/\Delta t\right\rfloor }.
\]
Noting that the cumulative distribution function for this event is therefore $F(T)=p_{\left[0,T\right]}(\varepsilon_{l},\varepsilon_{u})$
the expected switching time is
\begin{eqnarray*}
\mathbb{E}_{T}(\varepsilon_{l},\varepsilon_{u}) & = & \int_{0}^{\infty}t\frac{d}{dt}F(t)dt\\
 & = & \int_{0}^{\infty}t\frac{d}{dt}\left(1-p(\varepsilon_{l},\varepsilon_{u})^{t/\Delta t}\right)dt\\
 & = & -\frac{\Delta t}{\log(p(\varepsilon_{l},\varepsilon_{u}))}.
\end{eqnarray*}
}

\end{document}